\documentclass[sigconf]{}
\usepackage{amsmath,amssymb,amsfonts}
\usepackage{amsthm}
\newtheorem{theorem}{Theorem}
\newtheorem{lemma}{Lemma}
\usepackage{algorithm}
\usepackage{algorithmicx}
\usepackage{algpseudocode}
\usepackage{natbib}
\usepackage{graphicx}
\usepackage{textcomp}
\usepackage{tikz}
\usepackage{verbatim}
\usepackage[tightpage]{preview}
\PreviewEnvironment{tikzpicture}
\usepackage{xcolor}
\AtBeginDocument{%
  \providecommand\BibTeX{{%
    \normalfont B\kern-0.5em{\scshape i\kern-0.25em b}\kern-0.8em\TeX}}}






\begin{document}
\fancyhead{}
\settopmatter{printacmref=false} 
\renewcommand\footnotetextcopyrightpermission[1]{} 

\title{Asynchronous Byzantine Agreement\\ in Incomplete Networks\\}

\author{Ye Wang}
\email{wangye@ethz.ch}
\affiliation{%
  \institution{ETH Zurich}
  \city{Zurich}
  \country{Switzerland}
}

\author{Roger Wattenhofer}
\email{wattenhofer@ethz.ch}
\affiliation{%
  \institution{ETH Zurich}
  \city{Zurich}
  \country{Switzerland}
}


\begin{abstract}

The Byzantine agreement problem is considered to be a core problem in distributed systems. For example, Byzantine agreement is needed to build a blockchain, a totally ordered log of records. 
		Blockchains are asynchronous distributed systems, fault-tolerant against Byzantine nodes. 
		
		In the literature, the asynchronous byzantine agreement problem is studied in a fully connected network model where every node can directly send messages to every other node. This assumption is questionable in many real-world environments. In the reality, nodes might need to communicate by means of an incomplete network, and Byzantine nodes might not forward messages. Furthermore, Byzantine nodes might not behave correctly and, for example, corrupt messages. Therefore, in order to truly understand Byzantine Agreement, we need both ingredients: asynchrony and incomplete communication networks.
		
		In this paper, we study the asynchronous Byzantine agreement problem in incomplete networks. A classic result by Danny Dolev proved that in a distributed system with $n$ nodes in the presence of $f$ Byzantine nodes, the vertex connectivity of the system communication graph should be at least $(2f+1)$. While Dolev's result was for synchronous deterministic systems, we demonstrate that the same bound also holds for asynchronous randomized systems. We show that the bound is tight by presenting a randomized algorithm, and a matching lower bound. This algorithm is based on a protocol which allows other Byzantine agreement algorithms to be implemented in incomplete networks.
		
		\textit{Index Term}--- Blockchain, Byzantine agreement, communication network, randomized algorithms 
  
\end{abstract}
\maketitle





\pagestyle{plain}
\pagenumbering{arabic}
\section{Introduction}
	
	
	
	
	Byzantine agreement is at the heart of understanding distributed systems.
	Most existing work about byzantine agreement assumes a fully connected network, i.e., every node in the distributed system can directly communicate with every other node. 
	In reality, however, nodes are often connected by an unreliable network, such as the Internet. To communicate, two nodes must exchange messages, and these messages will be forwarded by relay nodes, which are controlled by third parties. Relay nodes may be compromised, even Byzantine. A relay node may for instance decide to corrupt or simply drop messages. This is particularly true in world-scale distributed systems, and world-scale systems (blockchains) are predominantly responsible for the current reawakened interest in byzantine agreement.
	
	
	 The Byzantine agreement problem has also been studied \textit{in a network}. Already as early as 1982, Danny Dolev \cite{dolev1982byzantine}  showed that two conditions are both necessary and sufficient to achieve Byzantine agreement in an $n$-node system: As usual, we need the number of Byzantine nodes $f$ to obey $f<\frac{n}{3}$; In addition, the vertex connectivity of the communication graph cannot be less than $(2f+1)$.
	
	Dolev's fundamental work had only considered the synchronous model, i.e., all communication happened in synchronous rounds. Moreover, the result was restricted to deterministic algorithms. 
	
	In world-scale systems, it is difficult to argue for synchronous communication. Consequentially, the focus has shifted away from synchronous systems towards asynchronous systems. Even though some form of synchrony is usually needed for liveness, safety is guaranteed even in completely asynchronous systems. This is the case for permissionless blockchain systems such as Bitcoin\cite{nakamoto2019bitcoin}, 
	and also for permissioned systems such as PBFT\cite{castro1999practical}. 
	

	\begin{center}
		\begin{figure}[!htb]
			\centering
			\includegraphics[width=8.5cm]{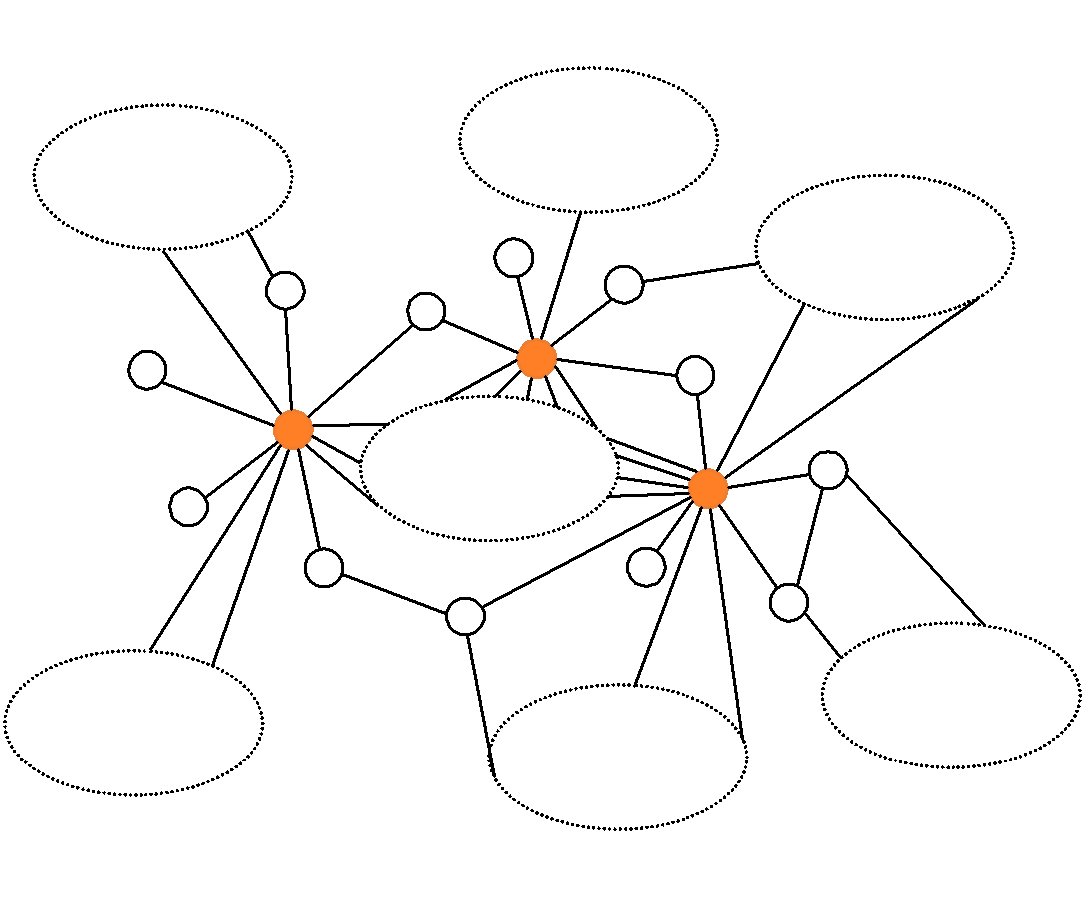}
			\caption{An incomplete network with 7 large node clusters and 14 additional nodes. The 3 solid nodes are well-connected; if these 3 nodes are Byzantine, there may split the graph into small disconnected subgrpahs.}
		\end{figure}
	\end{center}

	In this paper, we study the effect of \textit{incomplete networks} on byzantine agreement in \textit{asynchronous} distributed systems. We have a system with $n$ nodes, each with a binary input. At most $f$ nodes, with $f<\frac{n}{3}$, are Byzantine. The nodes are interconnected by an incomplete network, which can be represented by an undirected graph $G=(V_G,E_G)$. An example of such an incomplete network is shown in Figure 1.
	
	It was proven in \cite{fischer1985impossibility} that there is no deterministic algorithm which can solve Byzantine agreement in an asynchronous distributed system, with $f=1$, i.e., a single Byzantine node, or even a crash node. However, randomization helps \cite{ben1983another}, and consequently our algorithm will be randomized as well. More precisely,
	we present a randomized algorithm that solves the Byzantine agreement problem in asynchronous distributed systems with high probability, as long as the vertex connectivity of the network $G$ is at least $(2f+1)$. We also show that this bound is tight by proving that there is no randomized algorithm which can solve the asynchronous Byzantine agreement problem when the vertex connectivity of $G$ is less than $(2f+1)$. In other words, we demonstrate that with the help of randomization, Dolev's original bounds \cite{dolev1982byzantine} also hold in asynchronous networks.
	
	The rest of this paper is organized as follows: In the next section, we give an overview of related work. Section 3 defines the model that we study in this paper. In Section 4, we present a randomized algorithm for solving the Byzantine agreement in our model. We give the necessary condition for solving the Byzantine agreement problem under the model in Section 5. We conclude our work in Section 6.
	
	\section{Related Work}
	
	In this section, we discuss all the known related work on Byzantine agreement which are studied in asynchronous distributed systems and incomplete networks.
	
	The Byzantine agreement problem is essential in distributed systems. It was proved in \cite{fischer1985impossibility} that there is no deterministic algorithm that can solve the Byzantine agreement problem in the presence of even a single Byzantine node. Later Lamport et al. \cite{lamport2019byzantine} proved that if the number of Byzantine nodes is larger than or equal to $\frac{n}{3}$, it is impossible for any algorithm to solve the Byzantine agreement problem with $n$ nodes. The first randomized algorithm was proposed by Ben-Or \cite{ben1983another}. With this algorithm, all correct nodes will decide on the same output with high probability even when a constant fraction of nodes is faulty. Bracha \cite{bracha1987asynchronous} improved the previous result and proposed a randomized algorithm that solves the asynchronous Byzantine agreement problem if less than one third of nodes are Byzantine.
	
	The connectivity conditions of communication graphs have been studied right when the topic emerged. Danny Dolev \cite{dolev1982byzantine} studied Byzantine agreement in synchronous distributed systems among $n$ nodes in the presence of $f$ Byzantine nodes and figured out that it is possible to achieve Byzantine agreement in networks with vertex connectivity at least $(2f+1)$ and $n\geq 3f+1$. Later Dolev et al. and Abraham et al.\cite{dolev1986reaching,abraham2004optimal} showed that the approximate Byzantine agreement problem can be solved in synchronous networks if and only if $(2f+1)$ vertex connectivity is given.

	Recently, there are a series of results under the local broadcast model. In contrast to the orthodox point-to-point communication model, under the local broadcast model, all neighbors of a transmitting node are guaranteed to receive identical messages. A lower connectivity requirement under the local broadcast model in the synchronous setting was obtained in \cite{khan2019exact}. In the presence of $f$ Byzantine nodes, the following conditions are both necessary and sufficient. The communication graph $G$ with $n$ nodes has minimum degree $2f$ and $G$ is $( \left \lfloor \frac{3f}{2}+1 \right \rfloor)$-connected. The local broadcast model in the asynchronous distributed system was considered in \cite{samir2019asynchronous}. They show that it is necessary to have $(2f+1)$ vertex connectivity for solving the approximate Byzantine agreement problem. This bound keeps the same as the bound of the point-to-point communication model given in \cite{dolev1986reaching}.

\begin{figure*}
\centering
\begin{tikzpicture}[
    level/.style={rectangle, draw=none, rounded corners=1mm, fill=white, text centered, anchor=north, text=black},
    state/.style={rectangle, draw=none, fill=white, text width=2cm,
        text centered, anchor=north, text=black},
    leaf/.style={rectangle, draw=none, fill=white,
        text centered, anchor=north, text=black}, text width=3cm,
    level distance=0.4cm, growth parent anchor=south
]
\node (State00) [level] {Byzantine Agreement in Incomplete Networks} [->]
        [sibling distance = 8.6cm, level distance=1.5cm]
        child{[sibling distance=5.5cm]
        node (State01) [state] {Exact Agreement}
        child{[sibling distance=2.8cm, level distance=1.2cm]
        node (State03) [state] {Synchronous Network}
        child{[level distance=1cm]
        node (State11) [state] {Deterministic Algorithm}
        child{[level distance=1cm]
        node (State05) [state] {Point-to-Point}
        child{[level distance=1cm]
        node (State08) [state] {Necessary\& Sufficient Conditions}
        child{
        node (State10) [leaf] {Danny Dolev \cite{dolev1982byzantine}}
        }
        }
        }
        child{[level distance=0.61cm]
        node (State05) [state] {Local Broadcast}
        child{[level distance=1cm]
        node (State08) [state] {Necessary\& Sufficient Conditions}
        child{
        node (State10) [leaf] {Khan et al.\cite{khan2019exact}}
        }
        }
        }
        }
        }
        child{[level distance=1.2cm, sibling distance=2.86cm]
        node (State04) [state] {Asynchronous Network}
        child{[level distance=4.65cm]
        node (State11) [state] {Deterministic Algorithm}
        child{
        node (State12) [leaf] {Impossible Lamport et al.\cite{lamport2019byzantine}}
        }
        }
        child{[level distance=1cm]
        node (State05) [state] {Randomized Algorithm}
        child{[level distance=1cm]
        node (State07) [state] {Point-to-Point}
        child{[level distance=1cm]
        node (State08) [state] {Necessary\& Sufficient Conditions}
        child{
        node (State10) [leaf] {Our Work}
        }
        }
        }
        }
        }
        }
        child{[sibling distance=3cm]
        node (State02) [state] {Approximate Agreement}
        child{[level distance=1.2cm]
        node (State04) [state] {Synchronous Network}
        child{[level distance=1cm]
        node (State06) [state] {Deterministic Algorithm}
        child{[level distance=1cm]
        node (State07) [state] {Point-to-Point}
        child{[level distance=1cm]
        node (State08) [state] {Necessary\& Sufficient Conditions}
        child{
        node (State10) [leaf] {Dolev et al.\cite{dolev1986reaching} Abraham et al.\cite{abraham2004optimal}}
        }
        }
        }
        }
        }
        child{[level distance=1.2cm]
        node (State03) [state] {Asynchronous Network}
        child{[level distance=1cm]
        node (State06) [state] {Deterministic Algorithm}
        child{[level distance=1cm]
        node (State07) [state] {Local Broadcast}
        child{[level distance=1cm]
        node (State08) [state] {Necessary Condition}
        child{
        node (State10) [leaf] {Khan et al.\cite{samir2019asynchronous}}
        }
        }
        }
        }
        }
        }

;
        
\end{tikzpicture}
\caption{Comparison with previous work.}
\end{figure*}
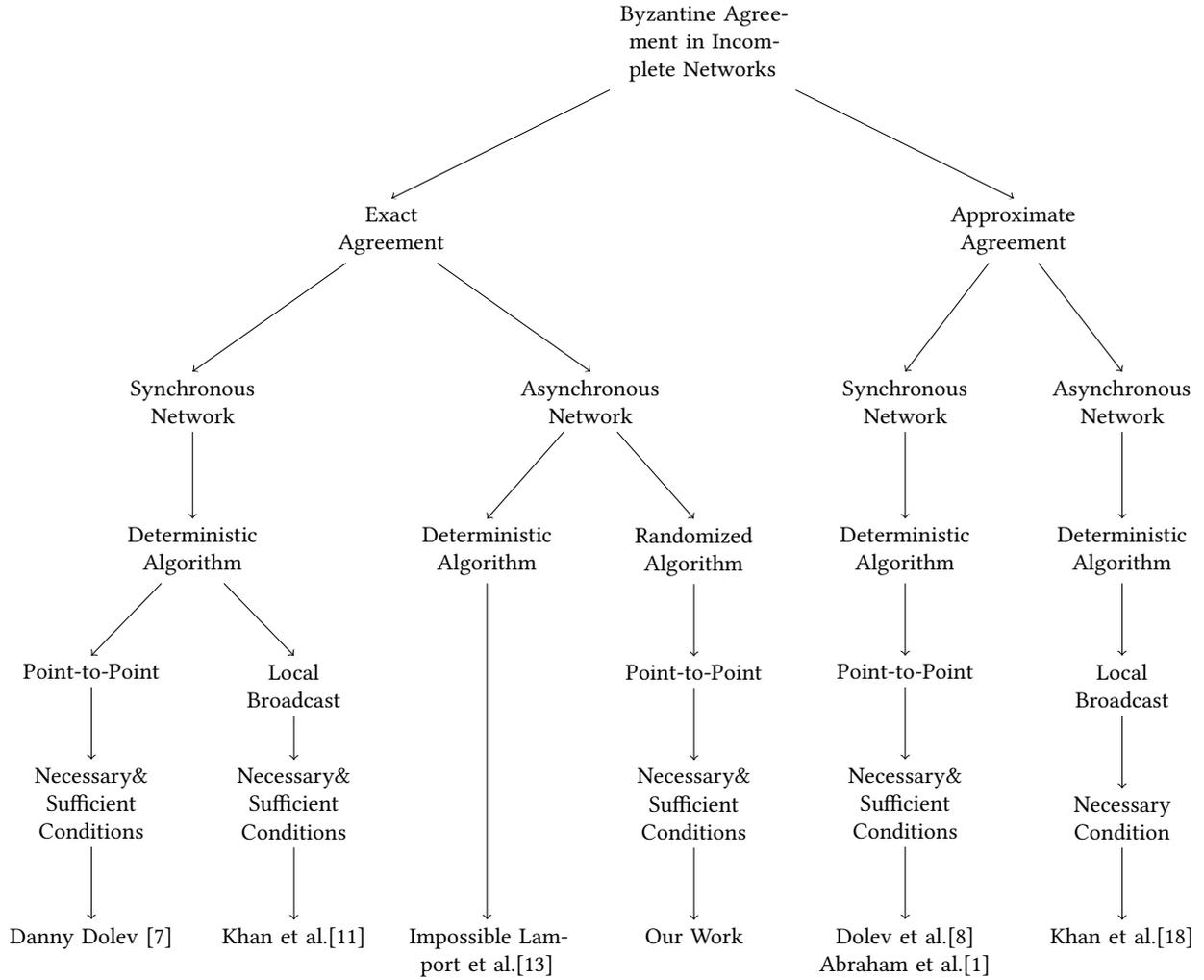

	 We are the first to study the exact Byzantine agreement problem in asynchronous distributed systems, as the comparison between our work and related work shown in Figure 2. In contrast to results in \cite{dolev1986reaching, samir2019asynchronous}, which consider approximate Byzantine agreement, we study the exact Byzantine agreement problem. To obtain an approximate Byzantine agreement, nodes are only required to obtain values that are close to each other, rather than identical. However, in many real-world systems, such as financial systems, one requires precise values. Any minor deviation may accumulate and cause serious consequences. Other work\cite{dolev1982byzantine, khan2019exact} solved exact Byzantine agreement solely in synchronous distributed systems by deterministic algorithms. Unfortunately, assuming synchrony is often not realistic in many world-scale systems. The impossibility for deterministic algorithms to solve the asynchronous Byzantine agreement problem inspires us to propose a randomized algorithm that achieves Byzantine agreement on undirected incomplete $(2f+1)$-connected communication graphs in asynchronous distributed systems with $n$ nodes in the presence $f$ Byzantine nodes.
	
	\section{Model and Notation}
	
	\subsection{Distributed System}
	
	Given a distributed system with $n$ nodes, the communication network between nodes is represented by an undirected graph $G=(V_G, E_G)$. Nodes $u$ and $v$ can send messages to each other if and only if they are adjacent in $G$, i.e., $(u,v)\in E_G$. We will use the terms of \textit{node} and \textit{vertex} interchangeably in this paper. The communication channels between nodes are authenticated. Nodes can recognize who is the sender of the message when they receive a message. Messages cannot be modified by any third party if messages are delivered via the authenticated channel between adjacent nodes. However, relay nodes on paths between nodes have the ability to modify messages or even generate fake messages. 
	
	\subsection{System parameters}
	
	The system we study in this paper has two critical parameters.
	
	\textbf{Asynchronous systems or synchronous systems}: In a synchronous system, there is an upper bound on the message delivery delay from one node to another which is known by all nodes. Nodes take actions in rounds. Each round takes a constant period of time which is sufficient for nodes to send messages, do local computations and accept incoming messages. In asynchronous distributed systems, the message delay from one node to another has no finite upper bound. Messages may be delayed for arbitrary time periods. However, messages will eventually be delivered. Nodes take actions when they are activated by events, such as messages arriving. In this paper, we study asynchronous systems.
		
		
	\textbf{Broadcast transmission or point-to-point transmission}: if the transmission mechanism is point-to-point, nodes can send a message to at most one neighbor at a time. If the transmission mechanism is broadcast, node $u$ sends identical messages to all of its neighbors at the same time. In this paper we study the point-to-point transmission mechanism.

	\subsection{Byzantine agreement}
	
	There are Byzantine nodes in distributed systems, which may behave arbitrarily. A Byzantine node may corrupt or simply drop messages. The system requires agreement among all nodes in the system, e.g. a total order of blocks in blockchain systems. Because there are general reduction protocols from multivalued agreement to binary agreement\cite{turpin1984extending, mostefaoui2000binary}, we focus on the Byzantine binary agreement problem in this paper.
	
	Every node has a binary input, and we want correct nodes decide on a binary value which satisfies the following conditions,
	
	\textbf{Agreement (exact)}:	all correct nodes decide for the same value.
		
	\textbf{Agreement (approximate)}: for any preassigned $\epsilon>0$, all correct nodes decide with outputs that are within $\epsilon$ of each other.
		
	\textbf{Termination (deterministic)}: all correct nodes terminate in a finite time.
		
	\textbf{Termination (probabilistic)}: the probability that a correct node is undecided after $r$ steps approaches $0$ as $r$ approaches infinity.
		
	\textbf{Validity}: the decision value must be the input value of a node.
	
	In contrast to the exact agreement, the approximate agreement is not valid for many real-world distributed systems. Thus, we focus on the exact agreement problem in this paper.

	Because it is impossible for any deterministic algorithm to solve exact Byzantine agreement problem in asynchronous distributed systems in the presence of a single faulty node, we study randomized algorithms with the probabilistic termination condition. There exists a source that generates random numbers in randomized algorithms. Nodes have access to these random numbers during executions. The agreement made by nodes is based on their input and these random values. Therefore, with the same input, the behavior of nodes can be different because of these random numbers.
	
	\subsection{Connectivity}
	
	A \textbf{path} $P$ in an undirected graph $G=(V_G, E_G)$ is a finite sequence of edges which joins a sequence of distinct vertices. A $uv$-path $P_{uv}$ is a path between nodes $u$ and $v$. Nodes $u$ and $v$ are endpoints of $P_{uv}$. Nodes other than $u$ and $v$ in $P_{uv}$ are internal nodes of $P_{uv}$.
	
	If there is a path between every node pair $u, v\in V_G$, then $G$ is connected. $G$ is $k$-connected if $G$ is still connected after removing $k-1$ arbitrary nodes. There is a classic result for $k$-connected graphs:
	
	\begin{theorem}[Menger's Theorem\cite{menger1927allgemeinen}]
		An undirected graph $G=(V_G, E_G)$ is $k$-connected if and only if for any two nodes $u, v\in V_G$, there are $k$ node disjoint $uv$-paths. Two $uv$-paths are disjoint if and only if they do not have any identical internal node.
	\end{theorem}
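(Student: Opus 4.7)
The plan is to prove the two directions of the biconditional separately. The ``only if'' direction is essentially a pigeonhole argument, while the ``if'' direction is the substantive content and is most cleanly obtained by a reduction to max-flow/min-cut on an auxiliary directed graph.

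For the easy direction, I would assume that for every pair $u,v \in V_G$ there exist $k$ pairwise internally node-disjoint $uv$-paths $P_1, \ldots, P_k$. Let $S \subset V_G$ be any set of at most $k-1$ vertices with $u,v \notin S$. Since the internal vertex sets of $P_1, \ldots, P_k$ are pairwise disjoint, each vertex in $S$ can lie on at most one of these paths, so at least one $P_i$ has no internal vertex in $S$ and survives in $G \setminus S$. Hence $u$ and $v$ remain connected after deleting any $k-1$ vertices, which is precisely $k$-connectivity.

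For the hard direction, I would fix arbitrary $u,v \in V_G$ and build a directed capacitated graph $G'$ as follows: split every vertex $w \notin \{u,v\}$ into $w^{-}$ and $w^{+}$ joined by an internal arc of capacity $1$, direct all original edges incident to $w$ into $w^{-}$ and out of $w^{+}$ (each with capacity $\infty$), and leave $u,v$ unsplit as source and sink. Integral $uv$-flows in $G'$ decompose into unit flows along arc-disjoint $uv$-paths, and because each internal-arc has capacity exactly $1$, these unit flows correspond bijectively to internally node-disjoint $uv$-paths in $G$. By the max-flow/min-cut theorem, the maximum value of such a flow equals the minimum total capacity of a $uv$-cut in $G'$. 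A finite cut can only use internal arcs (the $\infty$-capacity arcs are never cut), and the set of vertices $w$ whose internal arc is cut forms a $uv$-vertex-separator in $G$ of the same cardinality. Since $G$ is $k$-connected, no such separator has fewer than $k$ vertices, so the max-flow, and hence the number of node-disjoint $uv$-paths, is at least $k$.

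The main obstacle is the second direction: one must convert a purely combinatorial connectivity hypothesis into an existence statement about disjoint paths, and the cleanest way I know is the vertex-splitting reduction above, which requires verifying that minimum cuts in $G'$ are in one-to-one correspondence with minimum vertex separators in $G$. Once that correspondence is checked, the result follows immediately from max-flow/min-cut; any missteps are likely to occur in handling the $\infty$-capacity arcs and ensuring that the integral flow decomposes into genuinely internally-disjoint (not merely edge-disjoint) paths back in $G$.
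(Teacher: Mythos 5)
The paper does not prove this statement at all: Menger's theorem is quoted as a classical result and attributed to \cite{menger1927allgemeinen}, so there is no in-paper argument to compare yours against. Judged on its own, your proposal follows the standard modern route --- the easy direction by observing that a separator of size at most $k-1$ cannot meet all of $k$ internally disjoint paths, and the hard direction by vertex splitting plus max-flow/min-cut. Both halves are essentially sound, and the correspondence you describe between finite cuts in the auxiliary digraph $G'$ and $uv$-vertex-separators in $G$ is the right thing to verify. Two small points worth tightening: an integral flow decomposition a priori yields walks, so you should note that cycles can be discarded to obtain simple paths; and when you argue the easy direction you should make explicit that the hypothesis is being used for \emph{every} pair $u,v$ not in the deleted set $S$, which is what the paper's definition of $k$-connectivity requires.

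The one genuine gap is the case of adjacent $u$ and $v$. The global form of the theorem stated here quantifies over \emph{all} pairs $u,v\in V_G$, including adjacent ones, but for an adjacent pair there is no $uv$-vertex-separator at all, and in your construction the direct arc from $u$ to $v$ has infinite capacity, so the min-cut analysis degenerates (every finite cut must use only internal arcs, yet no set of internal arcs disconnects $u$ from $v$). The standard repair is to treat the edge $uv$ as one of the $k$ paths (it has no internal vertices), delete it, show that every $uv$-separator of $G-uv$ has size at least $k-1$, and apply your flow argument to $G-uv$ to obtain the remaining $k-1$ internally disjoint paths. Without this step the reduction as written only establishes the theorem for non-adjacent pairs. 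Since the paper uses the theorem to extract $2f+1$ disjoint paths between arbitrary node pairs of a $(2f+1)$-connected graph, the adjacent case does matter for the application and should not be waved away.
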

	
	

	\section{Sufficient Conditions}
	
	In this section, we present an algorithm that solves the asynchronous Byzantine agreement problem with high probability on $G=(V_G, E_G)$, when $G$ is $(2f+1)$-connected. We will introduce this algorithm in three levels, top to bottom. Lower level protocols provide fundamental functions for constructing higher level protocols.

	The primary protocol is discussed in Subsection 4.1, which is the top level of our algorithm. It is an $f$-resilient agreement protocol that describes the behavior of nodes, including how nodes communicate with each other and how they decide on the agreement.
	
	The middle level of our algorithm is a broadcast algorithm, as the primary $f$-resilient agreement protocol above requires a broadcast communication mechanism. However, our model is built on the point-to-point communication mechanism, which does not meet the requirement. We use an authenticated double-echo broadcast algorithm to fix this gap. We discuss this broadcast algorithm in Subsection 4.2. With this broadcast algorithm, identical messages will be delivered to all nodes.
	
	The bottom level of our algorithm is an algorithm that allows any node pair to communicate with each other in incomplete networks. The middle level algorithm, i.e., the authenticated double-echo algorithm is implemented in systems where nodes can directly communicate with each other. Therefore, we design an asynchronous purifying algorithm to effectuate the broadcast algorithm in our model. In Subsection 4.3, Subsection 4.4 and Subsection 4.5, we present the asynchronous purifying algorithm, which ensures that the message delivery between any node pair is correct in an incomplete communication network with the presence of Byzantine nodes. We explain how nodes send messages, transmit messages as internal nodes on a path and accept messages respectively.

	\subsection{Byzantine agreement Algorithm}
	
	The top level protocol is an $f$-resilient agreement protocol in asynchronous systems in the presence of $f<\frac{n}{3}$ Byzantine nodes, which is derived from the probabilistic protocol in \cite{bracha1987asynchronous}. In asynchronous systems, there is no global real-time clock. Nodes take actions when they receive messages from other nodes.
	
	\begin{algorithm}[!htb]
		\caption{Code for node $u$, $Phase(i)$, $i=0, 1, ...$}
		\begin{algorithmic}[1]
			\renewcommand{\algorithmicfunction}{\textbf{round 1}}
			\Function{}{}
			\State $\textbf{broadcast} (\mbox{source} = u, \mbox{round} = 3i+1, \mbox{value} = v_u)$
			\State $\textbf{wait}$ until validate $(n-f)$  messages of round $3i+1$
			\If{more than $\frac{n-f}{2}$ messages of round $3i+1$ \\ \ \ \ \ \ \ \  have the same value $v_m$}
			\State $v_u\gets v_m$
			\EndIf
			\EndFunction
			\State
			\renewcommand{\algorithmicfunction}{\textbf{round 2}}
			\Function{}{}
			\State $\textbf{broadcast}\ (\mbox{source} = u, \mbox{round} = 3i+2, \mbox{value} = v_u)$
			\State $\textbf{wait}$ until validate $(n-f)$ messages of round $3i+2$ 
			\If {more than $\frac{n}{2}$ messages have the same value $v_m$ \\ \ \ \ \ \ \ \ other than $v_u$}
			\State $v_u\gets v_m$
			\State $ready \gets True$
			\Else
			\State $ready \gets False$
			\EndIf
			\EndFunction
			\State
			\renewcommand{\algorithmicfunction}{\textbf{round 3}}
			\Function{}{}
			\If {$ready$}
			\State $\textbf{broadcast}\ (\mbox{source} = u, \mbox{round} = 3i+3,$\\ \ \ \ \ \ \ \ \ \ \ \ \ \ \ \ \ \ \ \ \ \ \ \ $\mbox{value} = v_u)$
			\Else
			\State $\textbf{broadcast}\ (\mbox{source} = u, \mbox{round} = 3i+3,$\\ \ \ \ \ \ \ \ \ \ \ \ \ \ \ \ \ \ \ \ \ \ \ \ $ \mbox{value} = \emptyset)$
			\EndIf
			\State $\textbf{wait}$ until validate $(n-f)$ messages of round $3i+3$
			\If {more than $2f$ messages have the same value \\ \ \ \ \ \ \ \ $v_m\neq \emptyset$}
			\State $v_u\gets v_m$
			\State $decision_u\gets v_u$
			\For{$j \gets 1$ to $3$}
			\State $\textbf{broadcast}\ (\mbox{source} = u, \mbox{round} = 3(i+1)+j,$\\ \ \ \ \ \ \ \ \ \ \ \ \ \ \ \ \ \ \ \ \ \ \ \ \ \ \ \ $ \mbox{value} = v_u)$
			\EndFor
			\State \textbf{terminate}
			\Else
			\If {more than $f$ messages have the same value \\ \ \ \ \ \ \ \ \ \ \ \ \ $v_m$}
			\State $v_u\gets v_m$
			\Else
			\State $v_u\gets coin\_toss$ (0 or 1 with probability $\frac{1}{2}$)
			\EndIf
			\EndIf
			\EndFunction
			\State
			\State \textbf{go to} $Phase(i+1)$
		\end{algorithmic}
	\end{algorithm}
	
	Nodes take actions in phases. We present the protocol of phase $i$ in Algorithm 1. The algorithm starts in phase 0. The initial value of $v_u$ is the input of node $u$ and $decision_u$ is initialized as $\perp$. We assume that nodes communicate with each other by reliable broadcast channels in this subsection. If a message is delivered to one node, it is also delivered to other nodes in the system.
	
	In each phase, node $u$ takes actions in three rounds. Node $u$ keeps a value at the beginning of each phase and broadcasts this value to other nodes in the first round. After validating $(n-f)$ messages of the first round from other nodes, node $u$ sets its value as the majority of these $(n-f)$ messages and enters in the second round. In the second round, node $u$ broadcasts the new value and waits until validating $(n-f)$ messages of the second round from other nodes. If there is a value which is accepted in more than $\frac{n}{2}$ messages, then $u$ sends this value at the beginning of the third round. Otherwise, node $u$ sends an empty message in the third round. Node $u$ waits until validating $(n-f)$ messages of the third round from other nodes. If more than $2f$ messages have the same value, then $u$ makes the decision. Node $u$ terminates the algorithm when it decides on $decision_u$. Before termination, it broadcasts three messages for the next phase. If less than $2f$ messages but more than $f$ messages have the same value, then $u$ does not change $v_u$ and keeps $v_u$ in the next phase. If there is no value exists in more than $f$ messages, node $u$ takes a coin toss to get a random value and keeps this value in the next phase.
	
	This protocol solves the Byzantine agreement problem because it satisfies validity, exact agreement, and probabilistic termination. We discuss the correctness of this protocol in these three aspects.
	
	\begin{theorem}
		Algorithm 1 solves the Byzantine agreement problem in asynchronous distributed systems in the presence of $f<\frac{n}{3}$ Byzantine nodes.
	\end{theorem}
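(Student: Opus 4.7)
The plan is to follow the standard Bracha-style argument, since Algorithm~1 is adapted from \cite{bracha1987asynchronous}. I will verify validity, exact agreement, and probabilistic termination separately, relying throughout on the quorum-intersection fact that any two sets of size $n-f$ intersect in at least $n-2f > f$ nodes, hence in more than $f$ correct nodes. In this subsection I will assume, as the algorithm text does, that \textbf{broadcast} is reliable, so that if any correct node validates a round-$r$ message with value $v$ from sender $s$, then every correct node eventually validates the same message; the lower-level protocols of Subsections 4.2--4.5 will be used to discharge this assumption later.

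For \textbf{validity}, I would start from the case where all correct nodes hold the same input $v$. In round~1 of phase~0, each correct node validates $n-f$ messages, of which at least $n-2f > (n-f)/2$ come from correct nodes and carry value $v$; so every correct node keeps $v_u=v$. The same counting argument shows that in round~2 more than $n/2$ messages carry $v$ (so $ready$ remains true with value $v$), and that in round~3 more than $2f$ messages carry $v$, causing every correct node to decide $v$ in phase~0. The validity condition follows.

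For \textbf{exact agreement}, the central claim is a \emph{uniqueness of witnesses} lemma: within a single phase, no two correct nodes can finish round~2 with $ready=\mathrm{True}$ holding different values. This follows because two witness-values $v\neq v'$ would require two overlapping quorums of size $> n/2$ of round-2 messages each agreeing on different values, and the intersection would force some correct node to have sent two different round-2 messages, a contradiction since $n>3f$. From uniqueness I would derive a \emph{propagation} lemma: if any correct node $u$ decides $v$ in phase~$i$, then (a)~$u$ broadcast round-3 messages with value $v$ in phase~$i$ and, by the terminal for-loop, in all three rounds of phase~$i+1$; (b)~of the $>2f$ round-3 messages in $v$ that $u$ validated, more than $f$ are from correct nodes, so every correct node eventually validates more than $f$ round-3 messages carrying $v$, and by uniqueness validates no round-3 message with value $v'\neq v$ from a correct node; hence each correct node either decides $v$ in phase~$i$ or adopts $v_u = v$ at the end of phase~$i$. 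Combining (a) and (b) with the validity argument applied to phase $i+1$ (all correct nodes now hold $v$) yields that every correct node decides $v$ by the end of phase~$i+1$. This gives exact agreement.

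For \textbf{probabilistic termination}, I would show that each phase in which no correct node has yet decided terminates with probability at least some constant $p>0$, independent of the scheduler. Using the uniqueness lemma, at most one value $v^{\star}$ can be ``proposed'' as a witness in that phase; so if every correct node that reaches the coin toss happens to flip $v^{\star}$, and the correct nodes that did adopt a value adopted $v^{\star}$, then at the start of the next phase all correct nodes hold $v^{\star}$, and by validity they all decide within one further phase. Since each independent fair coin matches $v^{\star}$ with probability $1/2$ and there are at most $n$ tosses, $p \geq 2^{-n}$, so the probability of not deciding by phase $r$ is at most $(1-p)^{r/2}\to 0$. The main obstacle will be making the asynchronous reasoning about ``which $n-f$ messages a correct node validates'' rigorous in the propagation lemma: because the adversary controls which $n-f$ messages arrive first, I need to argue that the more-than-$f$ correct round-3 messages with value $v$ are \emph{always} available in any set of $n-f$ round-3 messages a late node validates, which again reduces to a quorum-intersection count but has to be applied carefully together with the reliability assumption on broadcast.
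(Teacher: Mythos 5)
Your proposal is correct and follows essentially the same route as the paper: the same decomposition into validity, exact agreement, and probabilistic termination, the same round-2 quorum-intersection argument showing at most one witness value per phase, the same propagation step via the $f+1$ correct round-3 messages seen by every node, and the same coin-toss lower bound (your $2^{-n}$ versus the paper's $2^{-(n-f)}$ is an immaterial difference). The asynchronous subtlety you flag at the end is handled in the paper by exactly the quorum-intersection count you describe, so no further work is needed beyond what you outline.
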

	
	\begin{lemma}
		Algorithm 1 satisfies the validity property.
	\end{lemma}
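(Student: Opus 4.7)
My plan is to prove validity by establishing the invariant that every correct node $u$ always holds $v_u = v$, where $v$ denotes the common input of the correct nodes; the assignment $decision_u \gets v_u$ in round 3 then forces any decision to be $v$. I would proceed by induction on the rounds within a phase and then chain across phases, with the base case following immediately from the hypothesis that every correct node starts phase $0$ with $v_u = v$.

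The engine of the induction is a standard message-counting argument that exploits $f < n/3$: every correct node waits for $n - f$ validated messages, at least $n - 2f$ of which come from correct nodes and therefore carry $v$ (by the inductive hypothesis), while at most $f$ are adversarial. In round 1, the update threshold $(n-f)/2$ can be reached only by $v$, because $f < (n-f)/2 < n - 2f$ when $n > 3f$; hence $v_u$ either stays equal to $v$ or is re-assigned to $v$. The same argument applies to round 2 with the threshold $n/2$, using $f < n/2$: no adversarial majority can be built for any value distinct from $v$, so the update step cannot shift $v_u$ off of $v$ regardless of how the $ready$ flag ends up set. Round 3 requires a short case split because correct nodes now broadcast either $v$ (when $ready$) or the placeholder $\emptyset$ (when not). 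Since any value different from both $v$ and $\emptyset$ is supported by at most $f$ messages, neither the decision threshold $2f$ nor the revision threshold $f$ can be crossed by such a value: if a node decides, it decides $v$; if the revision branch is taken, $v_u$ is reset to $v$ or left unchanged; and the coin-toss branch is ruled out because the $n - 2f$ correct contributions already push either $v$ or $\emptyset$ past the $f$-message threshold.

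The main obstacle will be the bookkeeping in round 3. The interaction of the $\emptyset$ placeholder with the $ready$ flag can split the correct broadcasts between $v$ and $\emptyset$, and one must verify carefully that at least one of these two values strictly exceeds $f$ supporters in every correct node's view, thereby excluding the coin-toss branch and preserving the invariant. Once this case analysis is done, the threshold arithmetic enabled by $f < n/3$ makes the remaining steps routine, the invariant is maintained round by round and phase by phase, and validity follows.
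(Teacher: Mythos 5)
Your overall strategy---reduce to the case where all correct nodes share an input $v$ and show by per-round counting that $v_u$ never leaves $v$, so any decision must equal $v$---is the same route the paper takes; the paper handles exactly this case and then disposes of the differing-inputs case in one sentence. A minor omission first: you never treat the case where correct nodes' inputs differ. For binary validity it is trivial (any output in $\{0,1\}$ is some node's input), but the lemma needs it, and your invariant is vacuous there since no common $v$ exists.

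The substantive problem is the step you yourself flag as the main obstacle: your proposed pigeonhole does not close it. Among the $n-f$ validated round-3 messages at least $n-2f$ come from correct nodes, but if those genuinely split between $v$ and $\emptyset$, pigeonhole only guarantees that one of the two values has at least $\lceil (n-2f)/2 \rceil$ supporters, and $(n-2f)/2 > f$ requires $n > 4f$; it fails throughout the regime $3f < n \leq 4f$ that the algorithm must cover (for $n = 3f+1$ you get only $\lceil (f+1)/2 \rceil$ supporters). Moreover, even when $\emptyset$ is the value that clears the $f$ threshold, escaping the coin toss via $\emptyset$ does not by itself keep $v_u = v$. The repair---and what the paper's terse argument implicitly relies on---is to show the split never occurs: in the common-input case every correct node already validates enough round-2 messages carrying $v$ to become $ready$ with value $v$, so every correct node broadcasts $v$, not $\emptyset$, in round 3. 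Then $v$ alone is supported by at least $n-2f > f$ of the validated messages, while any other value (including $\emptyset$) has at most $f$ Byzantine supporters, which simultaneously rules out the coin toss, forces the revision branch to reassign $v$, and ensures that only $v$ can cross the $2f$ decision threshold. You need this ``all correct nodes become ready with $v$'' claim explicitly; without it the invariant does not survive round 3.
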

	
	\begin{proof}
	    There are two possible situations of node inputs in the system. In the first situation, all correct nodes have the same input value. In the second situation, correct nodes have different input values.
	
		If all correct nodes have the same input $v$, then all of them will receive more than $\frac{n-f}{2}>f$ messages of round 1 with the identical value $v$. In round 2, $v_u$ does not change because Byzantine nodes cannot create more than $\frac{n}{2}$ copies of malicious messages. For the same reason, all correct nodes will decide on $v$ in round 3.
		
		If nodes have different inputs, no matter which value they agree on, the validity property is always satisfied.
	\end{proof}
	
	\begin{lemma}
		Algorithm 1 satisfies the exact agreement property.
	\end{lemma}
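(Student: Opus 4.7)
The plan is to establish a one-phase cascade: once any correct node decides value $v$ in phase $i$, every other correct node ends phase $i$ holding $v_u = v$, so that the argument of Lemma 1 applied to phase $i + 1$ forces every correct node to decide $v$.

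First I would prove a uniqueness lemma at the ``ready'' level: in any single phase, at most one value can cause a correct node to set $\mathit{ready} \gets \mathit{True}$. A correct node becomes ready with some value $v_m$ only after collecting more than $n/2$ round-$2$ messages carrying $v_m$, and under the reliable broadcast channels assumed in this subsection each sender is committed to exactly one round-$2$ message. Two support sets of size greater than $n/2$ inside a population of $n$ senders cannot be disjoint, so two competing ready values $v \neq v'$ would force some sender to have emitted both values---a contradiction. Hence any correct node that transmits a non-$\emptyset$ round-$3$ message transmits this unique ready value, while Byzantine nodes can only contribute at most $f$ round-$3$ messages to any competing $v' \neq v$.

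Next I would tackle the cascade, which I expect to be the main obstacle. Suppose correct $u$ decides $v$ in phase $i$: its $n - f$ validated round-$3$ messages contain more than $2f$ copies of $v$, so at least $2f + 1$ distinct senders broadcast $v$ in round $3$. For any other correct $u'$, its own $n - f$ validated round-$3$ messages come from some $(n - f)$-subset of the $n$ senders, so inclusion-exclusion forces the overlap with the $v$-senders to be at least $(2f + 1) + (n - f) - n = f + 1$. Thus $u'$ observes more than $f$ round-$3$ messages with value $v$, and the algorithm either triggers the ``more than $2f$'' decision clause for $v$ (whence $u'$ decides $v$) or falls through to the ``more than $f$'' adopt clause for $v$, setting $v_{u'} \gets v$. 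By the ready-uniqueness lemma, $u'$ sees at most $f$ copies of any competing $v' \neq v$, so neither the decide nor the adopt clause can fire on a competing value, and the coin-toss branch is likewise unreachable.

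Finally I would observe that at the end of phase $i$ every correct node holds $v_u = v$. The decider has already terminated, but the algorithm specifies that it keeps broadcasting $v$ for the three rounds of phase $i + 1$, so no other correct node is held up by missing messages. The hypothesis of the validity argument of Lemma 1 is now satisfied with common initial value $v$ at phase $i + 1$, and that argument replays verbatim to show that every correct node decides $v$. The crux of the whole proof is the combined counting argument in the cascade step: both the $> n/2$ round-$2$ quorum used for ready-uniqueness and the $> 2f$ round-$3$ quorum used for overlap-based propagation rely on each sender being committed to a single message, exactly what the broadcast assumption of this subsection delivers.
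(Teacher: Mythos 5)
Your proof is correct and takes essentially the same route as the paper's: the $>n/2$ round-2 quorum intersection showing that only one value can make a correct node ready (hence no two correct nodes decide differently), the overlap count $(2f+1)+(n-f)-n=f+1$ forcing every other correct node to adopt the decided value, and a replay of the validity argument in the next phase using the decider's three pre-termination broadcasts. Your write-up is somewhat more explicit than the paper's in ruling out the competing adopt/decide and coin-toss branches, but the decomposition and the two key counting arguments are identical.
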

	
	\begin{proof}
		First, we claim that two correct nodes $u$ and $w$ won't decide on different values in round $3k+3$. Suppose not: then node $u$ decides on 0 while node $w$ decides on 1 in round $3k+3$. Because both of them validate more than $2f$ messages, there are two correct nodes $u'$, and $w'$ get ready with value 0 and value 1 in round $3k+2$, i.e., $ready=TRUE$, respectively. Node $u'$ validates more than $\frac{n}{2}$ messages with value 0 in round $3k+2$ and node $u'$ validates more than $\frac{n}{2}$ messages with value 1 in round $3k+2$. There must be a node broadcasting two messages with value 0 and value 1 separately in round $3k+2$, which is impossible. Thus, correct nodes decide on the same value in the same round.
		
		Suppose that node $u$ decide on 0 in round $3k+3$, it validates $(2f+1)$ messages with value 0. Hence, other correct nodes validate at least $(f+1)$ messages of these $(2f+1)$ messages which are validated by $u$. Because node $u$ will continue broadcasting correct messages with value 0 in phase $k+1$. All correct nodes have the same value from round $3(k+1)+1$ and all of them will decide on 0 in round $3(k+1)+3$.
		
	\end{proof}
	
	\begin{lemma}
		Algorithm 1 satisfies the probabilistic termination property.
	\end{lemma}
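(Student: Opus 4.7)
The plan is to exhibit, for each phase, a ``lucky event'' of probability at least a fixed constant $p > 0$ on which all correct nodes carry the same value into the next phase; combined with the validity-under-unanimity argument used in the proof of Lemma 1, this forces every correct node to decide within one more phase. Since the coin tosses in different phases are independent, the probability that no phase among the first $r$ is lucky is at most $(1-p)^r$, and this is the bound I would invoke to satisfy the probabilistic termination condition.

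First I would argue that in an arbitrary phase $i$ there is a unique ``candidate value'' $v^\ast \in \{0,1\}$ with the property that every undecided correct node ends round $3i+3$ either holding $v^\ast$ or having just executed $coin\_toss$. The uniqueness mirrors the first paragraph of the proof of Lemma 2: if two correct nodes had set $ready = True$ on different values in round $3i+2$, each would have seen more than $n/2$ round-$(3i+2)$ messages matching its own value, which is impossible because the two sets would have to share a common sender who could have broadcast only one value. Consequently the only non-$\emptyset$ value any correct node broadcasts in round $3i+3$ is $v^\ast$; and since the $f$ Byzantine nodes together contribute at most $f$ messages per recipient, no correct node can trigger the ``more than $f$'' branch on $1-v^\ast$.

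Next I would bound the lucky event: every correct node that executes $coin\_toss$ in phase $i$ outputs $v^\ast$. There are at most $n-f$ such nodes and the tosses are independent and uniform, so the event has probability at least $2^{-(n-f)}$. On this event every correct node begins phase $i+1$ holding $v^\ast$, where ``begins'' means either carrying $v^\ast$ into the new phase or (for nodes that decided in phase $i$) continuing to broadcast $v^\ast$ via the three-round for-loop of the decide branch. Rerunning the counting from the proof of Lemma 1 for phase $i+1$ then guarantees that every still-undecided correct node collects more than $2f$ matching messages in round $3(i+1)+3$ and decides.

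The hard part, I expect, is formalizing the ``no correct node adopts $1-v^\ast$'' step, because asynchrony lets different correct nodes observe different $(n-f)$-subsets of round-$(3i+3)$ messages and the Byzantine adversary can tailor its messages per recipient. The key observation is that a correct node with $ready = False$ broadcasts $\emptyset$ rather than a binary value, so only the $f$ Byzantines can contribute to the tally of $1-v^\ast$, and $f$ votes cannot exceed the ``more than $f$'' threshold. Once that step is pinned down, the remainder of the argument is the geometric-decay calculation sketched in the first paragraph.
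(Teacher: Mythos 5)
Your proposal is correct and follows essentially the same route as the paper: both identify the ``lucky event'' that all coin-tossing correct nodes land on the unique candidate value $v^\ast$ (unique by the same more-than-$n/2$ intersection argument used for Lemma 2), bound its probability below by $2^{-(n-f)}$, conclude decision in the following phase, and finish with the geometric bound $(1-2^{-(n-f)})^r \to 0$. The only difference is organizational --- the paper splits round $3k+3$ into four cases according to how many matching messages a node validates, whereas you subsume all cases under the single observation that every undecided correct node ends the round holding $v^\ast$ or tossing a coin --- which is, if anything, a cleaner presentation of the same argument.
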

	
	\begin{proof}
		
		Let us consider node actions in phase $k$. All nodes are still active in phase $k$. There are four possible situations that a correct node $u$ can be found in round $3k+3$.
		
		In the first situation, if node $u$ validates more than $2f$ messages with value $v$ in round $3k+3$, apparently, all correct nodes will decide on value $v$ in round $3(k+1)+3$ with probability 1.
		
		In the second situation, node $u$ validates more than $f$ messages with value $v$ in round $3k+3$. Other correct nodes will not decide on the value $v'\neq v$ or directly set its value to $v'$. Otherwise, with the same argument in the proof of Lemma 2, there must be a node broadcasting two messages with value 0 and value 1 respectively in round $3k+2$, which is impossible. Thus, the probability that other correct nodes start with value $v$ in phase $k+1$ is greater than $2^{-(n-f)}$.
		
		In the third situation, node $u$ validates less than $f$ messages with value $v$ in round $3k+3$. Other correct nodes might validate $v$ but won't validate a value other than $v$. The probability that all correct nodes start with value $v$ in phase $k+1$ is greater than $2^{-(n-f)}$.
		
		In the fourth situation, node $u$ has not validated a message with value $v$. Other correct nodes have not validated more than $f$ messages with value $v$. Otherwise, node $u$ will validate $v$ at least once. Therefore, all correct nodes toss coins and the probability that all correct nodes start with value $v$ in phase $k+1$ is $2^{-(n-f)}$.
		
		In all of these situations, algorithms terminates in phase $k+1$ with probability greater than or equal to $2^{-(n-f)}$. Thus, the probability that the algorithm never terminates is $\mathbb{P}(never\ terminating)\leq\lim_{k\rightarrow \infty }(1-2^{-(n-f)})^k=0$.
		
	\end{proof}
	
	\subsection{Byzantine Reliable Broadcast}
	
	In our model, we assume that the transmission mechanism between nodes is point-to-point, which does not meet the requirement of the top level protocol in Subsection \textit{A}; we use multiple \textbf{broadcast($\cdot$)} in Algorithm 1. We seek for a reliable broadcast to fix this gap. In this subsection, we discuss the details of a Byzantine reliable broadcast algorithm that helps us to implement the algorithm in our model.
	
	The Byzantine reliable broadcast allows nodes to broadcast in identical message $m$ to all nodes in the system. Nodes will validate the message as the response of the broadcast. A Byzantine reliable broadcast algorithm satisfies five properties\cite{cachin2011introduction},

	\textbf{Validity}: if a correct node $u$ broadcasts a message $m$, then every correct node eventually validates $m$.
	
    \textbf{No duplication}: every correct node validates message $m$ at most once in the broadcast of message $m$.
    
	\textbf{Integrity}: if a correct node $w$ validates a message $m$ from another correct node $u$, then $m$ was broadcast by $u$ before $w$ validating $m$.
	
	\textbf{Consistency}: if a correct node $w$ validates a message $m$ and another correct node $w'$ validates a message $m'$ in the same round, then $m=m'$.
	
	\textbf{Totality}: if a correct node $w$ validates a message $m$, then other correct nodes eventually validate $m$.

	We follow the authenticated double echo broadcast algorithm in\cite{bracha1987asynchronous}. The algorithm consists of three parts: broadcasting messages, echoing messages, and validating messages, which are shown in Algorithm 2, Algorithm 3, and Algorithm 4, respectively. Note that we assume full communication capability among nodes in this subsection, but it is not true in our model, we will explain how to solve this problem in Subsection \textit{C}, Subsection \textit{D} and Subsection \textit{E}.
	
	\begin{algorithm}[!htb]
		\caption{Broadcast Message $m=(\mbox{source} = u, \mbox{round} = k, \mbox{value} = v_u)$, Code for node $u$}
		\begin{algorithmic}[1]
			\State \textbf{Send} $(m, \mbox{from} = u, \mbox{label} = initial)$ to all nodes in the system.
		\end{algorithmic}
	\end{algorithm}
	
	Algorithm 2 indicates how node $u$ broadcasts a message $m$. It sends $m$ to all other nodes in the system with a $initial$ label.
	
	\renewcommand{\algorithmicfunction}{\textbf{upon}}
	\begin{algorithm}[!htb]
		\caption{Echo Message $m=(\mbox{source} = u, \mbox{round} = k, \mbox{value} = v_u)$, Code for node $w$}
		\begin{algorithmic}[1]
			\State $echo_w\gets\perp^N$
			\State $ready_w\gets\perp^N$
			\Function{${\rm accept\ message}\ (m, \mbox{from} = u, \mbox{label} = initial)$}{}
			\State \textbf{send} $(m, \mbox{from} = w, \mbox{label} = echo)$ to other nodes.
			\EndFunction
			\State
			\Function{${\rm accept\ message}\ (m=(\mbox{source} = u, \mbox{round} = k, \mbox{value} = v_u), \mbox{from} = v, \mbox{label} = echo)$}{}
			\If {$(\mbox{source} = u, \mbox{round} = k, ...)\notin echo_w[v]$}
			\State $echo_w[v]\gets echo_w[v]\cup m$
			\EndIf
			\EndFunction
			\State
			\Function{$\#(m\in echo_w[p], \forall p\in V_G)>\frac{n+f}{2}\ {\rm and}\ m\neq\perp$}{}
			\If{have not sent $(m, \mbox{from} = w, \mbox{label} = ready)$}
			\State \textbf{send} $(m, \mbox{from} = w, \mbox{label} = ready)$\\\ \ \ \ \ \ \ \ \ \ \ \ \ \ \ to other nodes.
			\EndIf
			\EndFunction
			\State
			\Function{${\rm accept\ message}\ (m=(\mbox{source} = u, \mbox{round} = k, \mbox{value} = v_u), \mbox{from} = v, \mbox{label} = ready)$}{}
			\If {$(\mbox{source} = u, \mbox{round} = k, ...)\notin ready_w[v]$}
			\State $ready_w[v]\gets ready_w[v]\cup m$
			\EndIf
			\EndFunction
			\State
			\Function{$\#(m\in ready_w[p], \forall p\in V_G)>f\ {\rm and}\ m\neq\perp$}{}
			\If{have not sent $(m, \mbox{from} = w, \mbox{label} = ready)$}
			\State \textbf{send} $(m, \mbox{from} = w, \mbox{label} = ready)$\\\ \ \ \ \ \ \ \ \ \ \ \ \ \ \ to other nodes.
			\EndIf
			\EndFunction
		\end{algorithmic}
	\end{algorithm}
	
	The algorithm is called authenticated double-echo broadcast because it has two echo steps, which are indicated in Algorithm 3. In the first step, node $w$ sends message $m$ with $echo$ label to other nodes when it accepts the message $m$ with $initial$ label from $u$. In the second step, $w$ sends message $m$ with $ready$ label to other nodes when it accepts more than $\frac{n+f}{2}$ copies of the message $m$ with $echo$ label from other nodes or when it accepts more than $f$ copies of the message $m$ with $ready$ label from other nodes.
	
	\begin{algorithm}[!htb]
		\caption{Validate Message $m=(\mbox{source} = u, \mbox{round} = k, \mbox{value} = v_u)$, Code for node $w$}
		\begin{algorithmic}[1]
			\State $Val\gets \emptyset$
			\Function{$\#(ready_w[p]=m, \forall p\in V_G)>2f\ {\rm \textbf{and}}\ m\neq\perp$}{}
			\If{$(\mbox{source} = u, \mbox{round} = k, ...)\notin Val$}
			\State \textbf{validate} $m$ as a message broadcast by $u$\\\ \ \ \ \ \ \ \ \ \ \ \ \ \ \ \ \ \ \ in round $k$
			\State $Val\gets Val\cup m$
			\EndIf
			\EndFunction	
		\end{algorithmic}
	\end{algorithm}
	
	We introduce the validating step in Algorithm 4. The message $m$ is validated by node $w$ when it accepts more than $2f$ copies of the message $m$ with $ready$ label from other nodes. This message is stored in the set $Val$. Set $Val$ is an empty set at the beginning of the algorithm.
	
	\begin{theorem}
		The authenticated double-echo broadcast algorithm is a Byzantine reliable broadcast.
	\end{theorem}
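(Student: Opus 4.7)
The plan is to verify each of the five properties of Byzantine reliable broadcast in turn, treating \emph{No duplication} and \emph{Integrity} as essentially bookkeeping and focusing the effort on \emph{Validity}, \emph{Consistency}, and \emph{Totality}, each of which reduces to a quorum-intersection argument that leverages $n>3f$. No duplication is immediate from the guard on the set $Val$ in Algorithm 4. Integrity follows by a backward chase: if correct $w$ validates $m$ attributed to $u$, then $w$ has accepted more than $2f$ ready copies for $m$, so at least $f+1$ of them were sent by correct nodes; the earliest such correct sender cannot have fired the ``more than $f$ ready'' rule (no correct ready yet exists for $m$, and Byzantine nodes alone cannot supply more than $f$ copies), hence it fired the echo rule, which in turn forces more than $\tfrac{n-f}{2}\geq 1$ correct nodes to have accepted an authenticated initial message $(m,u,initial)$ from $u$.

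For \emph{Validity}, I would observe that a correct broadcaster $u$ sends the initial message to all nodes, so each of the $n-f$ correct nodes accepts it and forwards an echo. Every correct $w$ therefore eventually accepts at least $n-f>\tfrac{n+f}{2}$ echoes for $m$, fires the echo rule to send ready, and eventually accepts at least $n-f>2f$ ready copies, hence validates $m$.

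For \emph{Consistency}, I would argue by contradiction. Suppose correct nodes $w,w'$ validate distinct $m,m'$ with the same $(source,round)$. Then more than $f$ correct nodes sent ready for $m$, and more than $f$ for $m'$. Consider, for each value, the chronologically first correct node to send a ready; as in the Integrity argument, it cannot have been triggered by the amplification rule, so it must have been triggered by the echo rule and thus required more than $\tfrac{n+f}{2}$ echoes for its value, of which more than $\tfrac{n-f}{2}$ came from correct nodes. Since a correct node echoes at most one value per $(source,round)$, the two sets of correct echoers for $m$ and $m'$ are disjoint, giving strictly more than $n-f$ correct nodes, which is a contradiction.

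For \emph{Totality}, the mechanism is the amplification rule. If correct $w$ validates $m$, then more than $f$ correct nodes have sent ready for $m$; every other correct node therefore eventually accepts more than $f$ ready copies for $m$ and is forced by the amplification rule to send its own ready for $m$. Consequently all $n-f$ correct nodes send ready for $m$, every correct node eventually accumulates at least $n-f>2f$ ready copies, and all correct nodes validate $m$. The main obstacle I expect is the careful handling of the ``first correct sender'' step in Integrity and Consistency: one has to rule out that the amplification rule could have been the trigger, which hinges on the observation that Byzantine nodes alone contribute at most $f$ ready copies and therefore cannot meet the strict $>f$ threshold before any correct node has acted. Once this is pinned down, the remaining arithmetic with $n>3f$ closes each case uniformly.
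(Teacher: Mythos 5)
Your proposal is correct and follows the same overall decomposition as the paper (Lemmas 4--8, one per property); your Validity, No duplication, and Totality arguments coincide with the paper's almost verbatim. The genuine divergence is in Consistency, and to a lesser extent Integrity. The paper proves Consistency by appealing to Integrity (``$m$ and $m'$ were both broadcast by $u$'') and then noting that $u$ broadcasts only one message per round; since Integrity is stated only for a \emph{correct} source $u$, that argument silently assumes the broadcaster is correct. You instead run the standard quorum-intersection argument on the echo sets: any two nodes that send $ready$ via the echo rule each saw more than $\frac{n+f}{2}$ echoes, and since each node contributes at most one echo per $(\mbox{source}, \mbox{round})$, the two quorums intersect in more than $f$ nodes, hence in a correct node that would have had to echo two different values. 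This buys you consistency even when the source is Byzantine, which is the case that actually matters for this property, so your route is strictly more robust. Your ``first correct $ready$-sender'' chase in Integrity is likewise a sharper version of the paper's looser counting. One small point to tighten: in Consistency you conclude ``strictly more than $n-f$ correct nodes, contradiction,'' which presumes exactly $f$ Byzantine nodes; it is cleaner to argue directly that the two echo quorums, each of size greater than $\frac{n+f}{2}$ among $n$ nodes, intersect in more than $f$ nodes and therefore contain a common correct node.
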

	
	\begin{proof}
	We prove Theorem 3 by indicating that the authenticated double-echo broadcast algorithm satisfies five properties of the Byzantine reliable broadcast in Lemma 4, Lemma 5, Lemma 6, Lemma 7 and Lemma 8.
	\end{proof}
	
	\begin{lemma}
		The authenticated double-echo broadcast algorithm satisfies the validity property.
	\end{lemma}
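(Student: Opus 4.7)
The plan is to trace the message flow triggered by a correct broadcaster $u$ through the three labels \emph{initial}, \emph{echo}, and \emph{ready}, and show that at each stage the threshold of the next stage is reached at every correct node. Under the working assumption of Subsection 4.2 that nodes can communicate directly and reliably, Algorithm 2 guarantees that the \emph{initial} copy of $m$ from $u$ is eventually delivered to every correct node.

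Next I would argue that every correct node $w$, upon accepting the \emph{initial} message, sends $(m,\mbox{from}=w,\mbox{label}=echo)$ to every other node (first clause of Algorithm 3). Since there are at least $n-f$ correct nodes, each correct $w$ eventually accumulates at least $n-f$ distinct \emph{echo} entries for $m$, one from each correct sender, in $echo_w$. The key arithmetic step is that $f<n/3$ gives $n-f>(n+f)/2$, so the $(n+f)/2$ threshold in the third clause of Algorithm 3 is crossed at every correct node, each of which then sends a \emph{ready} message for $m$.

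Finally, every correct node eventually collects at least $n-f$ distinct \emph{ready} entries for $m$, and the same bound $n-f>2f$ shows that the threshold in Algorithm 4 is crossed at every correct node, which therefore validates $m$ and adds it to $Val$. The amplification clause (the second \emph{ready} trigger on accepting more than $f$ \emph{ready} copies) is not needed for this direction; it is what will matter for totality in a later lemma.

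The only real obstacle is bookkeeping rather than conceptual: one has to be careful that $echo_w[p]$ and $ready_w[p]$ are indexed by the sender $p$, so that the $n-f$ echoes and $n-f$ readies produced by the correct nodes contribute $n-f$ distinct entries to the counts $\#(m\in echo_w[p],\forall p\in V_G)$ and $\#(ready_w[p]=m,\forall p\in V_G)$, independent of what the Byzantine nodes inject (or withhold). Once that is observed, validity follows directly from the two inequalities $n-f>(n+f)/2$ and $n-f>2f$, both equivalent to $n>3f$.
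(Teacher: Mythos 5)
Your proof is correct and follows essentially the same route as the paper's: trace the \emph{initial}--\emph{echo}--\emph{ready} cascade and observe that the $n-f$ correct nodes alone exceed both the $\frac{n+f}{2}$ echo threshold and the $2f$ ready threshold, using $n>3f$. Your extra remarks on sender-indexed bookkeeping and on not needing the ready-amplification clause are sound refinements but do not change the argument.
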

	\begin{proof}
		If a correct node $u$ broadcasts a message $m$, then other correct nodes will accept it and send a $echo$ message to others nodes in the system. Because $f<\frac{n}{3}$, we have $n-f>\frac{n+f}{2}$ and $f<\frac{n+f}{2}$. Correct nodes will accept at least $\frac{n+f}{2}+1$ $echo$ copies of message $m$ and send a $ready$ message to others.
		
		For the same reason, correct nodes will eventually accept at least $(2f+1)$ copies of message $m$ with $ready$ label from other correct nodes. Hence, message $m$ will be validated by every correct node eventually.
	\end{proof}
	
	\begin{lemma}
		The authenticated double-echo broadcast algorithm satisfies the no duplication property.
	\end{lemma}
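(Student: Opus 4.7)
The plan is to derive no duplication directly from the bookkeeping performed in Algorithm~4, which is the only place where a message is ever declared validated. I would first observe that validation of a message $m=(\text{source}=u,\text{round}=k,\text{value}=v_u)$ by a correct node $w$ can only occur inside the guarded block of Algorithm~4, whose precondition is $(\text{source}=u,\text{round}=k,\dots)\notin Val$. Thus every validation event is immediately preceded by a check against the set $Val$, and immediately followed by inserting $m$ into $Val$.

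From here the argument reduces to a simple monotonicity observation: once $m$ is added to $Val$ it is never removed, since Algorithm~4 only performs the update $Val\gets Val\cup m$ and no other line of Algorithms~2--4 mutates $Val$. Hence any subsequent re-triggering of the outer condition $\#(ready_w[p]=m,\forall p\in V_G)>2f$ (which may very well happen, as additional $ready$ copies of $m$ keep arriving from correct nodes) will fail the inner guard and will not cause a second validation. Therefore, across the entire execution, node $w$ validates $m$ at most once.

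The only subtlety I anticipate is making sure the claim is stated against the right notion of ``the same message''. The guard in Algorithm~4 compares the pair $(\text{source},\text{round})$ rather than the full triple including $\text{value}$, so the proof should emphasize that the no-duplication property is really about not validating two different values for the same $(\text{source},\text{round})$ slot either; this is immediate from the same guard but worth flagging explicitly, since it foreshadows the consistency argument in Lemma~7. No probabilistic or connectivity reasoning is needed, so I expect no real obstacle beyond this clarification.
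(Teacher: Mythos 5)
Your proposal is correct and follows essentially the same route as the paper: the paper's proof likewise rests on the observation that Algorithm~4 only validates $m$ when $(\mbox{source}=u, \mbox{round}=k,\dots)\notin Val$ and then inserts $m$ into $Val$, so a second validation is blocked. Your added remarks on the monotonicity of $Val$ and on the guard keying on $(\mbox{source},\mbox{round})$ rather than the full message are just a more careful spelling-out of the same argument.
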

	\begin{proof}
		The no duplication property is intuitive because $m$ is validated by node $w$ if it is not in the set $Val$. $m$ will be added to $Val$ after validation by node $w$. Hence every correct node only validates $m$ once.
	\end{proof}
	
	\begin{lemma}
		The authenticated double-echo broadcast algorithm satisfies the integrity property.
	\end{lemma}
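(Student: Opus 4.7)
The plan is to argue the contrapositive: assuming correct node $u$ never broadcasts message $m$, show that no correct node $w$ can validate $m$. The structure follows the two possible pathways by which a correct node produces a \emph{ready} message (the \emph{echo} threshold and the \emph{ready} amplification threshold), and rules both out in turn.

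First I would observe that since $u$ is correct and never broadcasts $m$, no correct node ever accepts an $(m,\text{from}=u,\text{label}=initial)$ message, hence no correct node ever sends an echo for $m$ on behalf of $u$. Consequently all echo copies of $m$ must originate from Byzantine nodes, so at most $f$ such copies exist in the whole execution. Since the echo threshold is more than $\tfrac{n+f}{2}$ and we have $\tfrac{n+f}{2} \geq f$ (an easy consequence of $n > 3f$), no correct node can ever fire the echo-triggered branch that emits a \emph{ready} message.

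Next I would handle the \emph{ready} amplification rule, which fires when a node sees more than $f$ ready copies of $m$. Here I would use a minimality argument: suppose for contradiction that some correct node eventually sends a ready message for $m$, and consider the first such correct node $w'$ in the global order of events. By the previous paragraph $w'$ did not fire through the echo rule, so $w'$ fired through the amplification rule, which requires more than $f$ ready messages already present in its inbox. Byzantine nodes contribute at most $f$ ready messages, so at least one ready message came from a correct node that acted before $w'$, contradicting the choice of $w'$. Hence no correct node ever sends a ready message for $m$, so at most $f$ ready copies of $m$ ever exist in the system, and the validation threshold ($>2f$) cannot be reached by $w$.

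The one subtle step is the minimality argument in the second paragraph, since a naive induction over rounds does not quite work in the asynchronous model; one must order events by the time at which ready messages are \emph{sent} (which is well defined per node) and exploit the fact that amplification strictly requires a predecessor correct sender. The remaining ingredients are just the arithmetic inequalities $n > 3f \Rightarrow \tfrac{n+f}{2} > f$ and the unique-identity assumption on authenticated channels, both of which are already in the model section and require no extra work.
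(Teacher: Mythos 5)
Your proof is correct and follows essentially the same route as the paper's: bound the echo copies and ready copies attributable to Byzantine nodes by $f$ and show neither the echo threshold, the ready-amplification threshold, nor the validation threshold can be reached. Your minimality argument for the first correct ready-sender actually makes rigorous a step the paper only asserts (``Byzantine nodes have no ability to realize these two events''), so no changes are needed.
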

	\begin{proof}
		If a correct node sends a message $m$ with $ready$ label to other nodes, it needs at least $\frac{n+f}{2}+1>f$ accepted $echo$ messages or $(f+1)$ accepted $ready$ messages. Byzantine nodes have no ability to realize these two events.
		
		A correct node needs at least $(2f+1)$ accepted $ready$ copies of a message $m$ to validate $m$. It is impossible for Byzantine nodes to create $(2f+1)$ fake messages with $ready$ label because there are less than $f$ Byzantine nodes in the system. Thus, message $m$ was broadcast from another node previously.
	\end{proof}
	
	\begin{lemma}
		The authenticated double-echo broadcast algorithm satisfies the consistency property.
	\end{lemma}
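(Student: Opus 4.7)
The plan is to argue by contradiction using a Bracha-style quorum-intersection argument. Suppose two correct nodes $w$ and $w'$ validate messages $m$ and $m'$ carrying the same $(\mbox{source}=u,\mbox{round}=k)$ but distinct values. By the validation rule in Algorithm~4, each has accepted more than $2f$ ready messages for its value, and since at most $f$ of these come from Byzantine nodes, more than $f$ correct nodes sent a ready for $m$ and similarly more than $f$ correct nodes sent a ready for $m'$.

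Next I would show that among the correct ready-senders for each value, at least one triggered the echo-based rule of Algorithm~3 rather than the ready-amplification rule. The amplification rule fires only after accepting more than $f$ ready messages, among which at least one must come from a correct node; so by well-foundedness of the asynchronous causal order, the earliest correct node to send a ready for $m$ must have used the echo rule, which requires accepting more than $\frac{n+f}{2}$ echoes for $m$. Applying the same reasoning to $m'$ and subtracting the at most $f$ Byzantine echoers leaves strictly more than $\frac{n-f}{2}$ correct nodes that echoed $m$ and strictly more than $\frac{n-f}{2}$ correct nodes that echoed $m'$.

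The final step is a pigeonhole on the pool of $n-f$ correct nodes: two subsets, each of size strictly greater than $\frac{n-f}{2}$, cannot be disjoint, so some correct node echoed both $m$ and $m'$. This contradicts the invariant that a correct node sends at most one echo per $(\mbox{source},\mbox{round})$ pair, since a correct node echoes only upon accepting the first corresponding initial message. Hence $m=m'$, establishing consistency.

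The step I expect to be the main obstacle is pinning down the two invariants on which the counting rests: first, that the set of correct ready-senders for $m$ has a well-defined causal minimum, so that the ``earliest correct sender'' is really forced into the echo rule rather than sitting in a mutually-dependent cycle of amplifications; and second, that each correct node commits to echoing a single value per $(\mbox{source},\mbox{round})$ pair, even when a Byzantine source $u$ attempts to inject conflicting initial messages. Once these two invariants are in place, the arithmetic closes cleanly from the hypothesis $n>3f$.
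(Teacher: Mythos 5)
Your proof is correct, but it takes a genuinely different route from the paper. The paper's proof of consistency is a two-line reduction: it invokes the integrity property (Lemma 6) to conclude that both $m$ and $m'$ were broadcast by $u$, and then observes that $u$ broadcasts only one message per round, so $m=m'$. Your proof is the classic Bracha quorum-intersection argument on echo messages: trace each validation back through the ready quorums to a first correct ready-sender forced into the echo rule, extract two sets of more than $\frac{n-f}{2}$ correct echoers each, and intersect them inside the pool of $n-f$ correct nodes. The trade-off is instructive. The paper's argument is shorter but only goes through when the source $u$ is correct, since integrity as stated in the paper is conditioned on $u$ being correct and ``$u$ broadcasts one message per round'' is exactly what a Byzantine source violates; your argument covers the Byzantine-source case, which is the case consistency is usually needed for. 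On the other hand, your argument leans on the invariant that a correct node echoes at most one value per $(\mbox{source},\mbox{round})$ pair, which you rightly flag: Algorithm~3 as written guards the ready step (``if have not sent ready'') but places no such guard on the echo step, so this invariant is the intended behavior of Bracha's protocol rather than something literally enforced by the pseudocode. Your well-foundedness argument for the earliest correct ready-sender and the final counting under $n>3f$ are both sound.
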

	\begin{proof}
		The integrity property of the authenticated double-echo broadcast algorithm implies that $m$ and $m'$ are all broadcast by node $u$ previously. Because node $u$ only broadcasts one message in a round, we must have $m=m'$. This message was broadcast by $u$ in this broadcast round.
	\end{proof}
	
	\begin{lemma}
		The authenticated double-echo broadcast algorithm satisfies the totality property.
	\end{lemma}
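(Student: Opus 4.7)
The plan is to trace how validation by one correct node forces every other correct node to eventually collect enough ready-labelled copies of $m$ to validate as well. The protocol offers a natural amplification step through the second ready-triggering rule (more than $f$ ready copies), so the strategy is to first show that validation by $w$ guarantees a critical mass of honest ready-senders, and then show that this critical mass propagates to all honest nodes.

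First I would unpack the validation condition from Algorithm 4: the fact that $w$ validates $m$ means $w$ has accepted strictly more than $2f$ ready-labelled copies of $m$ from distinct nodes. Since at most $f$ of the senders are Byzantine, strictly more than $f$ of those ready messages were sent by correct nodes. Call this set $R$, so $|R|\geq f+1$ and every node in $R$ has actually sent $(m,\mathrm{ready})$ to every other node.

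Next I would use the asynchronous-but-reliable channel assumption: every message sent by a correct node is eventually delivered. Therefore, for any other correct node $w'$, each of the $f+1$ ready messages from the nodes in $R$ is eventually accepted by $w'$. At that moment $w'$ has accepted more than $f$ ready copies of $m$, so the fifth \textbf{upon} clause in Algorithm~3 fires and $w'$ itself sends $(m,\mathrm{ready})$ to all nodes (if it has not done so already). Applying this argument to every correct node, we conclude that all $n-f$ correct nodes eventually send a ready message for $m$.

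Finally I would close the loop using $n-f>2f$, which follows from $f<n/3$. Every correct node $w'$ therefore eventually accepts strictly more than $2f$ ready copies of $m$ from distinct (correct) senders, at which point the condition in Algorithm~4 is satisfied and $w'$ validates $m$. The no-duplication property from Lemma~5 ensures this validation happens exactly once. I do not expect any real obstacle beyond being careful about the asynchronous delivery model; the only subtle point is emphasising that the initial $f+1$ ready messages come from \emph{correct} senders, which is what makes the amplification unconditional rather than contingent on Byzantine cooperation.
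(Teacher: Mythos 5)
Your proof is correct and follows essentially the same route as the paper's: extract at least $f+1$ correct ready-senders from the $>2f$ ready copies that $w$ accepted, use the $>f$-ready amplification rule to make every correct node send ready, and conclude from $n-f>2f$ that every correct node validates. Your version is slightly more explicit about the final counting step, but there is no substantive difference.
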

	\begin{proof}
		If a correct node $u$ validates a message $m$, it has already accepted at least $(2f+1)$ messages with $ready$ label. At least $(f+1)$ of them are sent by correct nodes.
		
		Thus, every correct node will accept at least $(f+1)$ messages with a $ready$ label from other correct nodes eventually. Because the authenticated double-echo broadcast algorithm requires a node to send a message of $m$ with $ready$ label when it accepts $(f+1)$ messages with $ready$ label.
		
		For this reason, every correct node will send a $ready$ message of $m$ to each other. Every correct node will validate $m$ eventually.
	\end{proof}
	
	\subsection{Send Messages}
	
	The authenticated double-echo broadcast algorithm was initially implemented in the system where nodes can communicate with each other directly. However, in our model, communication is restricted because the network is incomplete. The authenticated channels are only between adjacent nodes of the network. Danny Dolev introduced the Purifying algorithm, which has been used to solve communication problems on incomplete graphs\cite{dolev1982byzantine}. However, this algorithm only works in synchronous systems.
	
	In this subsection, Subsection \textit{D} and Subsection \textit{E}, we proposed an asynchronous purifying algorithm that fixes the communication gap in the system where not every pair of nodes can send and receive messages directly. The algorithm has the following properties such that the correctness of the authenticated double-echo broadcast algorithm is guaranteed.
	
	\textbf{Validity}: if a correct node $u$ sends a message $m$ to another correct node $w$, then node $w$ eventually accepts $m$.
	
	\textbf{No duplication}: if a correct node $u$ sends a message $m$ to another correct node $w$, then node $w$ only accepts $m$ once.
		
	\textbf{Integrity}: if a correct node $w$ accepts a message $m$ from another correct node $u$, then $m$ was broadcast by node $u$ before node $w$ accepting message $m$.
	
	We introduce our algorithm by giving an example of message delivery, i.e., node $u$ sends a message $m$ to node $w$ while node $v$ is an internal node of a path between $u$ and $w$. The asynchronous purifying algorithm consists of three parts: sending $m$ from the starting node $u$, transmitting $m$ via a path by an intermediate node $v$, and accepting $m$ at the destination node $w$, these steps are introduced respectively in this subsection, Subsection \textit{D} and Subsection \textit{E}.
	
	\begin{center}
		\begin{figure}[htbp]
			\centering
			\includegraphics[width=8.5cm]{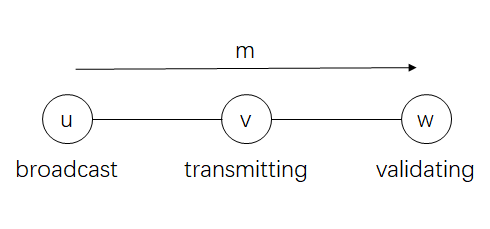}
			\caption{Node $u$ sends a message $m$ to node $w$ via a path including node $v$.}
		\end{figure}
	\end{center}

	Algorithm 5 demonstrates how node $u$ acts when $u$ decides to send a message $m$ to node $w$. As we discussed in Subsection \textit{B}, node $u$ sends a message $m$ with a label. When node $u$ sends a message $m$ to its neighbors, it sends the message $m$ together with the label, its own identifier $u$, and the message flooding path. Initially, the flooding path of the message is empty, which is represented by $\perp$.
	
	\begin{algorithm}[!htb]
		\caption{Send $(m, \mbox{from} = u, \mbox{label} = l)$ to node $w$, Code for node $u$}
		\begin{algorithmic}[1]
			\State \textbf{send} message $(m, \mbox{from} = u, \mbox{label} = l, \mbox{path} = \perp )$ to $u.neighbor$
		\end{algorithmic}
	\end{algorithm}
	
	\subsection{Transmit Messages}
	
	Because the communication graph is not complete, messages cannot be directly delivered to the destination. Internal nodes on paths between message source node and message destination node are responsible for transmitting messages. In our example, when node $v$ receives a message $m, \mbox{from} = u, \mbox{label} = l, \mbox{path} = \Pi$ from a neighbor $t$, it will send this message to its neighbors other than $t$ and store it in its local memory. The details are presented in Algorithm 6. 
	
	Node $v$ first checks that if the message flooding path $\Pi$ includes itself. If so, $v$ discards this message. This step ensures that node $v$ does not accept and send any duplicated message. Node $v$ discards a message if the message flooding path is incorrect. If the message is not discarded, then node $v$ stores this message, message label and the message flooding path in its local memory and sends $(\mbox{source} = u, \mbox{round} = k, \mbox{value} = v_u, \mbox{path} = \Pi-t)$ to all neighbors except $t$.
	
	\renewcommand{\algorithmicfunction}{\textbf{upon}}
	\begin{algorithm}[!htb]
		\caption{Transmit $(m, \mbox{from} = u, \mbox{label} = l, \mbox{path} = \Pi )$, Code for node $v$}
		\begin{algorithmic}[1]
			\Function{${\rm receive}\ (m, \mbox{from} = u, \mbox{label} = l, \mbox{path} = \Pi ) \ {\rm from}\ t$}{}
			\If{$v\in \Pi$}
			\State ignore the message
			\EndIf
			\If{$u\notin \Pi$}
			\State ignore the message
			\EndIf
			\State \textbf{store} $(m, \mbox{from} = u, \mbox{label} = l, \mbox{path} = \Pi-t)$\\\ \ \ \ \ \ \ \ \ \ \ \ in memory list $Mem$
			\State \textbf{send} $(m, \mbox{from} = u, \mbox{label} = l, \mbox{path} = \Pi-t)$\\\ \ \ \ \ \ \ \ \ \ \ \ to $v.neighbor\setminus t$
			\EndFunction
		\end{algorithmic}
	\end{algorithm}
	
	\subsection{Accept Messages}
	
	At node $w$, it accepts message $m$ if it receives enough copies of $m$ with the same label via disjoint flooding paths. The part of accepting messages is shown in Algorithm 7. Node $w$ accepts a message $m$ from node $u$ with label $l$ when it receives more than $f$ identical copies with label $l$ via disjoint flooding paths. This message is stored in $Acpt$. $Acpt$ is initialized as an empty set at the beginning of the whole algorithm.
	
	\begin{algorithm}[!htb]
		\caption{Accept $(m, \mbox{from} = u, \mbox{label} = l)$, Code for node $w$}
		\begin{algorithmic}[1]
			\State $Acpt\gets\emptyset$
			\Function{$\#((m, \mbox{from} = u, \mbox{label} = l, ...)\in Mem)>f$ \textbf{and} {\rm their} \textit{flooding paths} {\rm are disjoint}}{}
			\If {$(m, \mbox{from} = u, \mbox{label} = l)\notin Acpt$}
			\State \textbf{accept} $m$ as a message from node $u$ with label $l$ 
			\State $Acpt\gets Acpt\cup (m, \mbox{from} = u, \mbox{label} = l)$
			\EndIf
			\EndFunction
		\end{algorithmic}
	\end{algorithm}
	
	\begin{theorem}
		The asynchronous purifying algorithm satisfies the validity, no duplication, and integrity properties, which guarantees the correctness of the authenticated double-echo broadcast algorithm in our model.
	\end{theorem}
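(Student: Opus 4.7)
The plan is to establish the three transport-level properties in turn, in each case invoking Menger's Theorem (Theorem 1) on the $(2f+1)$-connectivity of $G$. Between any pair of correct nodes $(u,w)$ there exist $2f+1$ internally node-disjoint $uw$-paths; since the $f$ Byzantine nodes in the whole system can be internal to at most $f$ of these paths by disjointness, at least $f+1$ of them consist entirely of correct internal nodes. The gap between these $\geq f+1$ available clean paths and the acceptance threshold of ``more than $f$ disjoint copies'' in Algorithm 7 is the structural invariant I will exploit throughout.

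For validity I would argue by induction along each clean path: once correct $u$ executes Algorithm 5 on its incident edges, a correct intermediate node $v$ receiving a tuple from neighbor $t$ stores it in $\mathit{Mem}$ and relays it to $v.\text{neighbor}\setminus\{t\}$ with the path field extended. Asynchrony only delays, so a faithful copy of $(m,\text{from}=u,\text{label}=l)$ eventually reaches $w$ along every fully-correct disjoint path, with pairwise internally-disjoint path fields, and the guard of Algorithm 7 fires. For integrity I would argue in the contrapositive: if correct $w$ accepts $(m,\text{from}=u,\text{label}=l)$, then more than $f$ stored tuples in $\mathit{Mem}$ carry pairwise internally-disjoint path fields, so at least one such recorded path $P^{*}$ has only correct internal nodes. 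Walking $P^{*}$ backward from $w$ and using that each correct relay forwards only what it actually received from its predecessor on the path, one concludes that $u$ itself must have emitted $m$ with label $l$ before this chain could form. No duplication is then immediate from the $\mathit{Acpt}$ test in Algorithm 7 together with the $v\in\Pi$ loop check in Algorithm 6, which prevents a single copy from being counted twice via a cyclic walk.

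Composability finishes the theorem. Subsection 4.2 analyzes the authenticated double-echo broadcast under the assumption of an authenticated point-to-point channel between every ordered pair enjoying exactly validity, no duplication, and integrity. Substituting the purifying transport for that assumed channel therefore preserves all five reliable-broadcast properties established in Theorem 3. The main obstacle I anticipate is the integrity proof: one must rule out that $f$ colluding Byzantine nodes, spread across $f$ of the disjoint paths, could fabricate path fields on an $(f{+}1)$-th path and thereby forge an $m$ that $u$ never sent. Menger's theorem combined with the path-field bookkeeping in Algorithm 6 is precisely what defeats this attack, but making the bookkeeping argument watertight in the presence of arbitrary message timing and arbitrary Byzantine message content is where the bulk of the care will go.
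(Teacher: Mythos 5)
Your proposal matches the paper's own argument essentially step for step: validity via Menger's theorem giving $2f+1$ disjoint paths of which at least $f+1$ contain only correct internal nodes, integrity via the observation that any copy relayed through a Byzantine node has that node recorded in its flooding-path field by the next correct relay (so corrupted copies occupy at most $f$ of the required $f+1$ disjoint recorded paths, leaving one genuine copy), and no duplication via the $Acpt$ set. The paper likewise splits the work into exactly these pieces (its Lemmas 9 and 10 plus the $Acpt$ remark), and the "obstacle" you flag for integrity is resolved by precisely the last-Byzantine-node-is-recorded argument you sketch, so no further route is needed.
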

	
	\begin{proof}
	
	The no duplication property is intuitive because $(m, \mbox{from} = u, \mbox{label} = l)$ is accepted if it is not in $Acpt$.
	
	We prove the correctness of this asynchronous purifying algorithm from two perspectives. We fist prove the validity property in Lemma 9 by showing that every message $m$ sent by node $u$ will be accepted by node $w$ eventually. Then we prove that if a message $m$ is accepted by node $w$, it must be sent by node $u$ first to verify the integrity property in Lemma 10.
	
	\end{proof}

	\begin{lemma}
		When a correct node $u$ broadcasts a message $m$ with label $l$, another correct node $w$ will receive at least $(f+1)$ correct copies via disjoint flooding paths if the communication graph $G$ has $(2f+1)$ vertex connectivity.
	\end{lemma}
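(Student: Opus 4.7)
The plan is to apply Menger's theorem (Theorem 1) directly. Since $G$ is $(2f+1)$-connected, there exist $2f+1$ internally vertex-disjoint $uw$-paths $P_1,P_2,\ldots,P_{2f+1}$. Because these paths share no internal vertex, every Byzantine node can appear as an internal node on at most one of them. With only $f$ Byzantine nodes in the entire system, at most $f$ of the $2f+1$ paths can contain any Byzantine internal node, leaving at least $f+1$ paths whose internal nodes are all correct. Call these the \emph{clean} paths.

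Next I would argue that for each clean path $P_i = u, x_1, x_2, \ldots, x_k, w$, a faithful copy of $m$ reaches $w$ along $P_i$. By Algorithm 5, node $u$ sends $(m,\mathrm{from}=u,\mathrm{label}=l,\mathrm{path}=\perp)$ to all of its neighbors, including $x_1$. Inductively, once a correct node $x_j$ on $P_i$ receives the message from its predecessor $x_{j-1}$, Algorithm 6 forces it to store the message and forward it to every neighbor other than $x_{j-1}$; in particular it forwards to $x_{j+1}$ on $P_i$. Since eventual message delivery is guaranteed in the asynchronous model and each $x_j$ is correct and follows the protocol, the copy propagates all the way to $w$ in finitely many steps, and $w$ stores an entry in $\mathrm{Mem}$ for this copy whose recorded flooding path lies along $P_i$.

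Finally, I would observe that because the $P_i$'s are internally vertex-disjoint, the corresponding flooding paths recorded at $w$ for the $f+1$ clean paths are pairwise disjoint in internal vertices. Hence $w$ obtains at least $f+1$ correct copies of $m$ (all bearing the genuine value sent by $u$, since no Byzantine node was on any clean path to tamper with them) via disjoint flooding paths, which is exactly the statement to be proved.

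The main obstacle I anticipate is bookkeeping around the $\mathrm{path}$ field in Algorithm 6: one must verify that the copy travelling along a clean $P_i$ is not discarded by the two ignore-checks in that algorithm, i.e., that $v \notin \Pi$ and $u \in \Pi$ hold at each correct forwarding step for the copy we are tracking. Because all internal nodes of $P_i$ are correct and simply append themselves (per the intended semantics of the $\mathrm{path}$ field), and because $u$ is honestly listed as the source, both checks pass along $P_i$; the Byzantine-controlled floods on other paths may be discarded or corrupted but cannot prevent the clean-path copy from arriving intact. Making this bookkeeping precise is the only delicate part; the rest is a direct counting argument on top of Menger's theorem.
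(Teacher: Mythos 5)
Your proposal follows essentially the same route as the paper's proof: invoke Menger's theorem to obtain $2f+1$ internally disjoint $uw$-paths, observe that at most $f$ of them can contain a Byzantine internal node, and conclude that the flooding mechanism delivers at least $f+1$ untampered copies along the remaining clean paths. You are in fact somewhat more careful than the paper, which compresses the entire propagation argument into one sentence, whereas you explicitly track the induction along each clean path and flag the path-field bookkeeping in Algorithm 6 as the delicate point.
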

	
	\begin{proof}
		Because the connectivity of the communication graph $G$ is $(2f+1)$, there are at least $(2f+1)$ disjoint paths between any node pair. Therefore, there are at least $(2f+1)$ disjoint paths between the source node $u$ and the destination node $w$.
		
		There are at most $f$ byzantine nodes in the system. These nodes can only appear in at most $f$ disjoint paths between $u$ and $w$. Thus, there are at least $(f+1)$ disjoint paths only contains correct nodes between $u$ and $w$.
		
		Correct messages will be transmitted through these paths from node $u$ to node $w$. Hence, node $w$ will receive at least $(f+1)$ correct copies via disjoint flooding paths, and message $m$ with label $l$ from node $u$ will be accepted by node $w$ eventually.
	\end{proof}
	
	\begin{lemma}
		Every message $m$ with label $l$ which is validated in the asynchronous purifying algorithm is a correct message.
	\end{lemma}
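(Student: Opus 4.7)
The plan is to establish integrity by combining the disjointness requirement on flooding paths with the bounded number of Byzantine nodes. Suppose a correct node $w$ accepts a message $m$ with label $l$ as originating from a correct node $u$. By the acceptance condition in Algorithm 7, there are more than $f$ entries in $Mem$ of the form $(m, \text{from}=u, \text{label}=l, \text{path}=\Pi_i)$ whose flooding paths $\Pi_1, \Pi_2, \dots, \Pi_{f+1}$ are pairwise internally disjoint. I would first observe that since at most $f$ nodes in $V_G$ are Byzantine, a pigeonhole argument on the disjoint flooding paths produces at least one path $\Pi_j$ whose internal nodes are all correct.

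Next I would trace the message backwards along $\Pi_j$. The key tool is the authenticated point-to-point channel: whenever a node receives a message from a neighbor, it can reliably identify that neighbor. Thus if $\Pi_j = (u, p_1, p_2, \ldots, p_k, w)$ and every $p_i$ is correct, then $p_k$ only stored and forwarded the message to $w$ because it received $(m, \text{from}=u, \text{label}=l, \text{path}=\Pi_j')$ from $p_{k-1}$ over the authenticated link, and by Algorithm 6 it would only have done so if the path prefix consistency check passed. Iterating this argument down the chain, $p_1$ must have received $(m, \text{from}=u, \text{label}=l, \text{path}=\perp)$ from $u$ itself on the authenticated $(u,p_1)$-channel. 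Since $u$ is correct, that delivery can only have occurred as a consequence of $u$ executing Algorithm 5 on $m$.

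The main obstacle is handling Byzantine behavior on the remaining (up to) $f$ paths, in particular the possibility that Byzantine relays fabricate flooding paths or inject spurious copies of $m$. I would address this by emphasizing that fake copies only contribute to the count of received messages but cannot collectively violate the disjointness condition enough to exhaust all $f+1$ paths: since there are only $f$ Byzantine nodes, they can corrupt at most $f$ of the $f+1$ disjoint paths, so at least one survives clean, and that is all the argument needs. I would also note that the "$u \in \Pi$" and "$v \notin \Pi$" checks in Algorithm 6 prevent correct relays from being tricked into recycling or misattributing messages, so the clean path argument is not undermined by forged path annotations on the adversarial paths.

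Finally, combining this integrity argument with Lemma 9 (validity) and the intuitive no-duplication property noted in the proof of Theorem 4, the three required properties of the asynchronous purifying layer all hold, which is precisely what is needed to lift the authenticated double-echo broadcast of Section 4.2 to the incomplete $(2f+1)$-connected network setting.
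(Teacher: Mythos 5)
Your proposal is correct and follows essentially the same route as the paper's proof: both arguments rest on the observation that any copy of $m$ that passed through a Byzantine relay has that relay recorded in its stored flooding path by the next correct node, so the at most $f$ Byzantine nodes can taint at most $f$ of the $f{+}1$ pairwise disjoint accepted paths, leaving at least one genuinely clean copy that traces back to $u$. Your explicit backward induction along the clean path via the authenticated channels is just a more detailed rendering of the paper's ``last Byzantine node is recorded'' argument, so no substantive difference remains.
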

	
	\begin{proof}
		
		We prove that Byzantine nodes cannot manipulate validated messages.
		
		Consider $s$ is the last Byzantine node in a path $P$ between node $u$ and node $w$. We claim that if a message $m$ with label $l$ is delivered to $w$ successfully through $P$, then $s$ must appear in the flooding path of this message because node $s$ is recorded in the flooding path by the next correct node after node $s$ in $P$. This information will not be modified by other nodes in the graph because $s$ is the last Byzantine node in $P$.
		
		Byzantine nodes appear in at most $f$ disjoint paths. Thus, at most $f$ copies of $m$ with label $l$ can be modified by them. However, we need $(f+1)$ identical copies to validate message $m$ with label $l$. Therefore, there must be at least one copy of $m$ is correct. This message is transmitted through a path without any Byzantine node.
		
		There is at least one correct copy of $m$ with label $l$ accepted by node $w$ when $m$ with label $l$ is validated by node $w$. Hence, message $m$ is a correct message sent by node $u$ if it is validated in the asynchronous purifying algorithm.
	\end{proof}
	
	Until now, we accomplished the randomized protocol that solves the Byzantine agreement problem with the presence of $f$ Byzantine nodes in our model.

	\begin{theorem}
		If the communication graph $G$ of the asynchronous distributed system with $n$ nodes and $f<\frac{n}{3}$ Byzantine nodes have $(2f+1)$ vertex connectivity, then our algorithm solves the asynchronous Byzantine agreement problem, that all correct nodes decide on the same value with high probability.
	\end{theorem}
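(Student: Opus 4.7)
The plan is to prove Theorem 5 by composing the three layers constructed in Subsections 4.1--4.5 in a strict bottom-up fashion, observing that each layer supplies exactly the abstract primitive assumed by the next. The structure is: the asynchronous purifying algorithm realizes a virtual authenticated point-to-point channel between any pair of correct nodes over the incomplete network; this virtual channel allows the authenticated double-echo broadcast to operate as if the underlying network were complete; the broadcast primitive in turn satisfies the reliable-broadcast interface that Algorithm~1 is designed against.

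First I would invoke Theorem 4 to conclude that, whenever $G$ is $(2f+1)$-connected, the purifying layer delivers any message $m$ sent by a correct node $u$ to a correct node $w$ with the \textbf{validity}, \textbf{no-duplication}, and \textbf{integrity} guarantees needed by the double-echo broadcast. The key quantitative ingredient here is Menger's Theorem together with Lemmas 9 and 10: $(2f+1)$-connectivity yields $(2f+1)$ vertex-disjoint $uw$-paths, at most $f$ of which can contain a Byzantine internal node, so at least $f+1$ fully correct disjoint paths carry the genuine $m$ while at most $f$ Byzantine-tainted copies can be forged, meeting the acceptance threshold in Algorithm~7 exactly.

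Next I would apply Theorem 3. The proofs of Lemmas 4--8 used only that pairs of correct nodes can exchange messages with sender-authenticated, eventually-delivered, non-duplicated, integrity-preserving semantics; the purifying layer from Step~1 provides precisely this interface on the incomplete network. Consequently, the authenticated double-echo broadcast of Subsection~4.2, when layered on top of the purifying algorithm, is a Byzantine reliable broadcast over $G$, satisfying validity, no duplication, integrity, consistency, and totality. Finally, I would invoke Theorem 2: its proof (Lemmas 1--3) treats the underlying \textbf{broadcast} primitive abstractly, using only the reliable-broadcast properties just established. Therefore Algorithm~1, run on top of the two lower layers, satisfies validity, exact agreement, and probabilistic termination, which is exactly the asynchronous Byzantine agreement specification with high probability.

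The main obstacle I anticipate is not any single layer but the \emph{composition argument itself} in the asynchronous setting: one must be careful that ``eventually accepted'' at the purifying layer composes into ``eventually validated'' at the broadcast layer and ``eventually decided'' at the agreement layer, despite arbitrary message delays, Byzantine internal nodes that may withhold or corrupt messages on up to $f$ of the disjoint paths, and the possibility that a correct node terminates (lines with \textbf{terminate}) before others finish their phase. I would resolve this by observing that each lower-layer guarantee is stated purely in terms of eventual delivery rather than rounds, so the composition is layer-local and inherits the probabilistic termination bound $1-(1-2^{-(n-f)})^k \to 1$ from Lemma~3 without modification.
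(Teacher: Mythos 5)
Your proposal is correct and follows essentially the same route as the paper: the paper states Theorem 5 without a separate proof, relying implicitly on exactly the bottom-up composition you spell out (Theorem 4 for the purifying layer, Theorem 3 for the reliable broadcast, Theorem 2 for Algorithm 1). If anything, your explicit attention to why the ``eventual'' guarantees compose across layers in the asynchronous setting is more careful than what the paper writes down.
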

	
	\section{Necessary Conditions}
	
	In the previous section, we presented a randomized algorithm that solves the Byzantine agreement problem with $n$ nodes with the presence of $f$ Byzantine nodes in our model where the connectivity of the communication graph is at least $(2f+1)$. This algorithm gives us a upper bound of the vertex connectivity requirement for the communication graph.
	
	In this section, we prove that this bound is tight. There does not exist an algorithm that can solve the asynchronous Byzantine agreement problem with high probability when the vertex connectivity of $G$ is less than $(2f+1)$.
	
	\begin{theorem}
		If the communication graph $G$ of the asynchronous distributed system with $n$ nodes and $f<\frac{n}{3}$ Byzantine nodes does not have $(2f+1)$ vertex connectivity, then it is impossible to find an algorithm that solves the asynchronous Byzantine agreement problem, i.e., all correct nodes decide on the same value.
	\end{theorem}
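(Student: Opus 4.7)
The plan is to argue by contradiction via a Dolev-style simulation argument, adapted to the randomized asynchronous setting. Assume $G$ has vertex connectivity at most $2f$; then, by the converse of Menger's theorem (Theorem~1), there is a separator $C\subseteq V_G$ with $|C|\le 2f$ whose removal partitions the remaining vertices into two nonempty sets $A$ and $B$ with no edges between them. Split $C$ arbitrarily as a disjoint union $C=C_1\cup C_2$ with $|C_1|,|C_2|\le f$, and let $\mathcal{A}$ be a hypothetical randomized algorithm that solves asynchronous Byzantine agreement on $G$ with high probability.

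I would then consider three executions of $\mathcal{A}$. In execution $E_1$, every node in $A\cup C_1$ is correct with input $0$ and the (at most $f$) nodes of $C_2$ are Byzantine; by validity and probabilistic termination, all correct nodes decide $0$ with probability at least $1-\epsilon$ for an arbitrarily small $\epsilon>0$. In execution $E_2$, every node in $B\cup C_2$ is correct with input $1$, the nodes of $C_1$ are Byzantine, and symmetrically the correct nodes decide $1$ with probability at least $1-\epsilon$. In execution $E_3$, every node of $A$ (resp.\ $B$) is correct with input $0$ (resp.\ $1$), and all of $C$ is Byzantine; each $C_1$-node interacts with $A$ by simulating the honest input-$0$ behavior it would perform in $E_1$ and interacts with $B$ by replaying the Byzantine transcript that $C_1$ produced against $B$ in $E_2$. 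The $C_2$-nodes do the mirror image.

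The crucial claim is that the joint view of the $A$-nodes in $E_3$ has the same distribution as their joint view in $E_1$, and symmetrically for the $B$-nodes and $E_2$. The reason is that $C$ is a separator, so every message reaching $A$ passes through $C$; the asynchronous adversary is free to schedule deliveries and couple the private random tapes of $A$-nodes so that the two views coincide pointwise. Consequently, in $E_3$ the nodes of $A$ decide $0$ with probability at least $1-\epsilon$ and the nodes of $B$ decide $1$ with probability at least $1-\epsilon$, so by a union bound the agreement property fails in $E_3$ with probability at least $1-2\epsilon$, contradicting the correctness of $\mathcal{A}$ for any $\epsilon<\tfrac{1}{2}$.

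The main obstacle is making the indistinguishability rigorous under randomization. One has to verify that the Byzantine coalition occupying $C$ in $E_3$ can, for every adversarial schedule and every realization of coin flips, produce a transcript toward $A$ that is statistically identical to the one a (correct-$C_1$, Byzantine-$C_2$) coalition produces in $E_1$, and symmetrically toward $B$; in particular, no information leaks across the cut except through the simulated $C$. Asynchrony helps here, because the adversary fully controls message delays and so can realize the required couplings, but the argument still requires a careful hybrid: the Byzantine scheduler in $E_3$ must construct the $E_1$-transcript toward $A$ and the $E_2$-transcript toward $B$ on the fly, using fresh independent copies of the random tapes for the simulated honest $C_1$- and $C_2$-nodes. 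Once this coupling is established, the probabilistic contradiction above closes the proof.
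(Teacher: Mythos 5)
There is a genuine gap, and it is the central one. In your execution $E_3$ you declare \emph{all} of $C$ Byzantine. Since the only guarantee on the separator is $|C|\le 2f$, this execution may contain up to $2f$ faulty nodes, which exceeds the fault budget $f$ that the algorithm $\mathcal{A}$ is required to tolerate. An algorithm is perfectly entitled to violate agreement when more than $f$ nodes are Byzantine, so no contradiction follows. This is not a technicality: the reason the threshold is $2f+1$ connectivity rather than $f+1$ is precisely that the adversary can only corrupt \emph{half} of the cut, and the other half consists of correct nodes that keep relaying honest information across the partition. If corrupting the entire cut were permissible, your argument would ``prove'' that $f+1$ connectivity is already insufficient, which is false.

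The paper's proof (Lemma~12) handles exactly this point. It splits $C$ into $R$ and $T$ with $|R|,|T|\le f$, and in the contradiction execution $E_3$ only $R$ is Byzantine while $T$ is \emph{correct} with input $1$. The price is that one can no longer argue indistinguishability by letting the cut freely fabricate transcripts toward each side; instead the paper introduces a fourth execution $E_4$ on a doubled graph $H$ (two copies $u_0,u_1$ of every node, with the $X$--$T$ edges crossed between the copies), in which \emph{all} nodes are correct. The views of $X_0,Y_0,R_0$ in $E_4$ match $E_1$, the views of $X_1,Y_1,R_1$ match $E_2$, and the views of $X_0$, $Y_1$, $T_1$ match those of $X$, $Y$, $T$ in $E_3$ when $R$ plays the mixed strategy $(R_0$ toward $X$, $R_1$ toward $Y\cup T)$. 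That is the ingredient your proposal is missing: a device that keeps the correct half $T$ of the cut consistent with both sides simultaneously. Your coupling discussion (scheduling and random tapes) is the right kind of care for the randomized asynchronous setting, but it is applied to an execution that does not yield a contradiction. Separately, and more minor: your $E_1$ and $E_2$ leave the inputs of $B$ and $A$ respectively unspecified, yet you invoke validity to force a unanimous decision; validity only applies if all correct nodes share the same input, so those inputs must be fixed to $0$ and $1$ respectively.
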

	
	There are two different requirements in Theorem 6. We prove Theorem 6 with Lemma 11 and Lemma 12.
	
	\begin{lemma}[\cite{bracha1983resilient}]
		If there exists an algorithm that solves the Byzantine agreement problem in our model with $n$ nodes with the presence of $f$ Byzantine nodes, then $n\geq 3f+1$.
	\end{lemma}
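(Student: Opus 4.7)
The plan is to argue by contradiction using the standard partitioning/indistinguishability argument, adapted to the asynchronous randomized setting. Suppose for contradiction that some algorithm $\mathcal{A}$ solves the Byzantine agreement problem with $n \leq 3f$ nodes; by monotonicity it suffices to rule out $n = 3f$, and we may assume the communication graph is complete (making it only easier for $\mathcal{A}$, so any impossibility in the complete case implies impossibility here).

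The first step is to partition the node set $V$ into three disjoint groups $A$, $B$, $C$, each of size $f$. I would then consider three executions with three different adversary choices:
\begin{itemize}
\item $\sigma_1$: the nodes in $A$ are Byzantine and remain silent; every node in $B \cup C$ has input $0$. By validity and agreement, all $2f$ correct nodes must eventually decide $0$ with high probability.
\item $\sigma_2$: the nodes in $C$ are Byzantine and remain silent; every node in $A \cup B$ has input $1$. Symmetrically, all correct nodes decide $1$ with high probability.
\item $\sigma_3$: the nodes in $B$ are Byzantine; nodes in $A$ have input $1$, nodes in $C$ have input $0$. The adversary controlling $B$ maintains two ``personalities'': to every node in $A$ it delivers exactly the messages that correct $B$ would send in $\sigma_2$, and to every node in $C$ it delivers exactly the messages that correct $B$ would send in $\sigma_1$.
\end{itemize}

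The key step is then the indistinguishability argument. Fix the random tapes of the correct nodes and couple them across the three scenarios. Conditional on these tapes, the local view (sequence of received messages) of any node in $A$ under $\sigma_3$ is identical to its view under $\sigma_2$, because it receives the same messages from $B$ (by construction) and the same messages from $A$ itself (same inputs, same tapes), while $C$ is silent in $\sigma_2$ and the asynchronous scheduler in $\sigma_3$ delays every message from $C$ to $A$ long enough that $A$ terminates first. By a symmetric coupling the view of any node in $C$ under $\sigma_3$ matches its view under $\sigma_1$. Hence nodes in $A$ decide $1$ and nodes in $C$ decide $0$ in $\sigma_3$ with the same probability as in $\sigma_2$ and $\sigma_1$, which contradicts the agreement property.

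The main obstacle is justifying that $\sigma_3$ is a valid asynchronous execution, i.e.\ that the adversary can actually sustain both personalities simultaneously. This is exactly where asynchrony is essential: because the model places no bound on message delay, the scheduler is free to defer every message from $A$ to $C$ (and vice versa) until after both ``halves'' have locally terminated and decided, while still eventually delivering all messages. Randomization does not weaken this argument, since the coupling identifies identically distributed local views across scenarios and hence identically distributed decisions. The bound $n \geq 3f+1$ follows.
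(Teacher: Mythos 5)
The paper gives no proof of this lemma at all --- it is stated purely as a citation, with the remark ``Lemma 11 was proved in \cite{bracha1983resilient}'' --- so there is no in-paper argument to diverge from. Your three-group partitioning and two-personality indistinguishability argument is correct and is exactly the standard proof behind that citation; it is also the same scenario-simulation technique the paper itself deploys for its connectivity lower bound (Lemma 12), including the device of delaying cross-partition messages until after decisions and coupling random tapes across executions. The only caveat worth noting is that your appeal to validity in $\sigma_1$ and $\sigma_2$ uses the strong form (if all correct nodes have input $v$ they must decide $v$), which is stronger than the paper's literal validity condition (``the decision value must be the input value of a node''), but the paper adopts the same strong reading in its own proof of Lemma 12, so you are consistent with its conventions.
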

	
	\begin{lemma}
		If there exists an algorithm that solves the Byzantine agreement problem in our model with $n$ nodes on the communication graph $G$ in the presence of $f$ Byzantine nodes, then $G$ is $(2f+1)$-connected.
	\end{lemma}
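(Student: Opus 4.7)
The plan is to establish the contrapositive: if the vertex connectivity of $G$ is at most $2f$, then no randomized protocol solves asynchronous Byzantine agreement with high probability. I would begin by invoking Menger's Theorem (Theorem~1) to fix a vertex separator $C\subseteq V_G$ with $|C|\leq 2f$ whose removal splits $V_G\setminus C$ into two non-empty parts $A$ and $B$ having no edge between them. Then I would partition $C=C_1\cup C_2$ with $|C_1|,|C_2|\leq f$, so that either $C_1$ or $C_2$ alone is an admissible Byzantine coalition.

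The heart of the argument is a scenario-indistinguishability trick in the spirit of Dolev's original synchronous lower bound, adapted to the randomized asynchronous setting. I would consider three executions. In $E_0$, every node in $A\cup C_1\cup B$ has input $0$ and $C_2$ is Byzantine; by validity together with agreement, every correct node must decide $0$ with high probability. In $E_1$, every node in $A\cup C_2\cup B$ has input $1$ and $C_1$ is Byzantine; symmetrically, every correct node must decide $1$. In the hybrid execution $E^\star$, nodes in $A\cup C_1$ take input $0$, nodes in $B\cup C_2$ take input $1$, and every node is correct. The crucial observation is that, because $C$ separates $A$ from $B$, every message that ever reaches $A$ from the ``other side'' must pass through $C_2$; hence a suitably programmed Byzantine $C_2$ in $E_0$ can internally simulate a virtual copy of $B$ with input $1$ together with honest boundary behaviour, producing for $A$ a view whose distribution matches $A$'s view in $E^\star$. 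A symmetric simulation by Byzantine $C_1$ in $E_1$ makes $B$'s view there indistinguishable from its view in $E^\star$. Combined with validity in $E_0$ and $E_1$, these couplings force $A$ to decide $0$ in $E^\star$ with high probability while $B$ decides $1$, contradicting agreement.

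The main obstacle will be making the indistinguishability rigorous against a randomized protocol under asynchrony. I must argue that the Byzantine coalition can sample the random coins of its ``virtual'' honest copies of $B$ (or $A$) and of the simulated relay nodes from the genuine distribution, so that the joint distribution of the messages crossing the cut in $E_0$ matches the corresponding distribution in $E^\star$. I also need to specify an adversarial message schedule for $E^\star$: the scheduler delays all real messages from $B$ to $C_2$ (and vice versa) beyond the decision time of $A$, so that within the prefix of $E^\star$ relevant to $A$'s decision, nothing from the genuine $B$-side has yet arrived and the simulation produced by Byzantine $C_2$ in $E_0$ reproduces $A$'s transcript exactly. Once these couplings are in place, the final step is quantitative: if correctness is required with error probability at most $\varepsilon$ in each of $E_0$, $E_1$, $E^\star$, then the three indistinguishability couplings together with a union bound force disagreement in $E^\star$ with probability at least $1-3\varepsilon$, which is positive for any $\varepsilon<1/3$ and contradicts the assumed correctness of the algorithm.
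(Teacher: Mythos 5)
There is a genuine gap, and it sits at the very center of your argument: the claim that ``every message that ever reaches $A$ from the other side must pass through $C_2$'' is false. Messages from $B$ to $A$ must pass through the whole separator $C=C_1\cup C_2$, and in your execution $E_0$ the half $C_1$ is \emph{correct}. So while Byzantine $C_2$ is busy presenting $A$ with a virtual copy of $B$ having input $1$, the real $B$ (which in $E_0$ is correct with input $0$) is simultaneously and faithfully talking to $A$ through the honest $C_1$. The resulting view of $A$ in $E_0$ is an inconsistent mixture of ``$B$ has input $0$'' (via $C_1$) and ``$B$ has input $1$'' (via $C_2$), not the view $A$ has in $E^\star$, so the coupling you need does not exist. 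Your scheduling patch does not repair this: delaying only the $B\leftrightarrow C_2$ channels leaves the $B\to C_1\to A$ flow intact, and delaying \emph{all} of $B$'s messages makes $B$ look like more than $f$ silent nodes (whenever $|B|>f$), in which case an asynchronous protocol that waits for $n-f$ reports may simply never let $A$ decide, so the ``prefix relevant to $A$'s decision'' need not exist. A related symptom is that your hybrid $E^\star$ is fault-free, so nobody in it is available to do any simulating; all the work is pushed onto $C_2$ in $E_0$, which is exactly where the argument breaks.

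The paper's proof fixes this with the idea your proposal is missing: a doubled ``covering'' network $H$ containing two copies $u_0,u_1$ of every node, wired so that the edges between $X$ and $T$ cross between the two copies, together with a fault-free execution $E_4$ on $H$. The Byzantine half of the cut then only ever needs to simulate the \emph{two copies of itself} ($T_0$ and $T_1$, or $R_0$ and $R_1$), showing one face to one side of the cut and the other face to the rest; the consistency of everything else --- in particular of the real, honest $B$ --- is guaranteed by the wiring of $H$ rather than by simulation, since the honest nodes genuinely cannot distinguish the execution on $G$ from the corresponding portion of $E_4$ on $H$. Validity in the two all-same-input executions pins down the decisions of the two halves of $H$, and the hybrid execution (with $R$ Byzantine and two-faced, and with the disagreement arising between $X$ and $Y\cup T$) yields the contradiction. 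Your high-level skeleton (separator of size at most $2f$, split into two halves of size at most $f$, three executions, validity plus indistinguishability) matches the paper, but without the covering-graph construction the indistinguishability step cannot be made to go through.
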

	
	Lemma 11 was proved in \cite{bracha1983resilient}. It is impossible for randomized algorithms to solve the asynchronous Byzantine agreement problem if the number of Byzantine nodes $f$ is greater than or equal to $\frac{n}{3}$.
	
	We prove Lemma 12 by using the state machine based approach, which is similar to the technique in \cite{fischer1986easy, bracha1983resilient, dolev1986reaching}. We discuss an example that it is impossible for any randomized algorithm to solve the Byzantine agreement problem with high probability. Correct nodes will decide on different values.
	
	\begin{proof}[Proof for Lemma 12]

		Suppose for the sake contradiction, there exists an algorithm $A$ that solves the asynchronous Byzantine agreement problem with high probability in our model with $n$ nodes, tolerating at most $f< \frac{n}{3}$ Byzantine faulty nodes, communicating on the communication graph $G = (V_G, E_G)$, where $G$ is not $(2f+1)$-connected. $A$ outlines a procedure $A_u$ for each node $u\in V_G$ that describes state transitions of $u$.
		
		We can find a vertex cut of $G$ and the size of the vertex cut is less than $(2f+1)$. Let $C$ be the vertex cut of $G$ and $ \left | C \right |\leq 2f$. Other vertices $V_G\setminus C$ are partitioned into two non-empty vertex sets $X, Y$, such that $X$ and $Y$ are disconnected in $G\setminus C$.
		
		Because $ \left | C \right |\leq 2f$, we can partition $C$ into two disjoint vertex sets $R$ and $T$, i.e., $C = R\cup T$ and $0< \left | R \right |\leq f$ and $0<\left | T \right |\leq f$. The structure of network $G$ is shown in Figure 4.
		
		\begin{center}
			\begin{figure}[htbp]
				\centering
				\includegraphics[width=5cm]{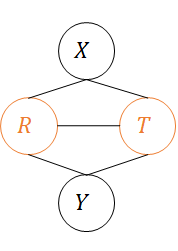}
				\caption{Network structure of $G$.}
			\end{figure}
		\end{center}
		
		Let us consider three executions $E_1, E_2,$ and $E_3$ on $G$. We run algorithm $A$ in these three executions.
		
		In the first execution $E_1$, nodes in $T$ are Byzantine while other nodes in $X, Y, R$ are correct with input 0.
		
		In the second execution $E_2$, nodes in $T$ are Byzantine, and other nodes in $X, Y, R$ are correct with input 1.
		
		In the third execution $E_3$, nodes in $X, Y, T$ are correct, while nodes in $R$ play Byzantine strategies. We input 0 to nodes of $X$ and input 1 to nodes in $Y$ and $T$.
		
		It is easy for us to figure out the outputs of nodes in $E_1$ and $E_2$. Because of the validity of $A$, all correct nodes in $E_1$ decide on 0, no matter how Byzantine nodes act. For the same reason, all correct nodes in $E_2$ agree on 1.
		
		However, it is not intuitive what is the output of nodes in the third execution $E_3$. We will explain that correct nodes in $E_3$ will output different values if Byzantine nodes in $R$ play specific strategies. To understand the behavior of nodes in $E_3$, we introduce the fourth execution $E_4$ on another graph $H$.
		
		We construct a network $H=(V_H, E_H)$ based on $G$. For each node $u\in V_G$, there are two copies of $u$, i.e., $u_0, u_1\in V_H$. Thus, $V_H$ can be partitioned into 8 vertex sets: $(X_0, Y_0, R_0, T_0, X_1, Y_1, R_1, T_1)$.
		
		If $(u,v)\in E_G$ and $u, v$ are in the same vertex set, then we copy $(u, v)$ twice in $E_H$, such that $(u_0, v_0) \in E_H, (u_1,v_1) \in E_H$. If $(u,v)\notin E_G$, then there is no edge between $(u_0, v_0)$ and $(u_1, v_1)$.
		
		If $(u,v)\in E_G$ and $u, v$ are not in the same vertex set, we build connections according to the following rules.
		
		\begin{itemize}
			\item If $u\in R, v\in T$ and $(u,v)\in E_G$, then $(u_0, v_0)\in E_H$, $(u_1, v_1)\in E_H$.
			\item If $u\in X, v\in R$ and $(u,v)\in E_G$, then $(u_0, v_0)\in E_H$, $(u_1, v_1)\in E_H$.
			\item If $u\in Y, v\in R$ and $(u,v)\in E_G$, then $(u_0, v_0)\in E_H$, $(u_1, v_1)\in E_H$.
			\item If $u\in Y, v\in T$ and $(u,v)\in E_G$, then $(u_0, v_0)\in E_H$, $(u_1, v_1)\in E_H$.
			\item If $u\in X, v\in T$ and $(u,v)\in E_G$, then $(u_1, v_0)\in E_H$, $(u_0, v_1)\in E_H$.
		\end{itemize}

		We give the structure of $H$ in Figure 5. Edges within vertex sets are not displayed, while edges between vertex sets are represented by a single edge. 
		
		\begin{center}
			\begin{figure}[htbp]
				\centering
				\includegraphics[width=8.5cm]{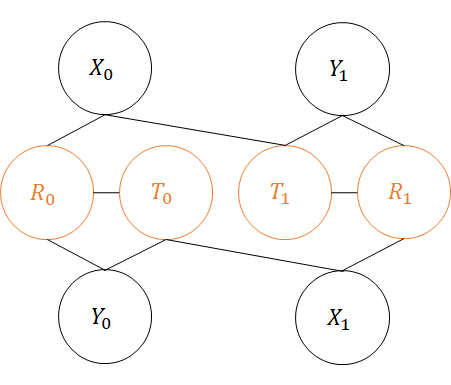}
				\caption{Network structure of $H$ to run execution $E_4$.}
			\end{figure}
		\end{center}
		
		$E_4$ is an execution on $H$ as follows. Each node pair $u_0, u_1\in V_H$ runs $A_u$, which is the same algorithm protocol as $u\in V_G$ runs. Nodes in $X_0, Y_0, R_0, T_0$ have initial input 0 and nodes in $X_1, Y_1, R_1, T_1$ have initial input 1. All nodes are correct.
		
		To understand the output of nodes in $E_3$, we first discuss the output of nodes in $E_4$. We start from nodes in $X_0, Y_0, R_0$ in $E_4$. We claim that the behavior of $X_0, Y_0, R_0$ in $E_4$ is modelled by $X, Y, R$ in $E_1$.
		
		Let us consider strategies of Byzantine nodes in $T$ in $E_1$. Each node $u\in T$ in $E_1$ can play a mixed strategy of the actions of $(u_0\in T_0, u_1\in T_1)$ in $E_4$. It considers itself as a combination of $u_0$ and $u_1$ in $E_4$. It reacts to nodes in $X$ as the same as the reaction of $u_1$ to nodes in $X_0$ and reacts to nodes in $Y$ and $R$ as the same as the reaction of $u_0$ to nodes in $Y_0, R_0$.
		
		As we discussed before, no matter how nodes in $T$ behavior, the output of nodes in $X, Y, R$ in $E_1$ is 0. Because nodes in $X_0, Y_0, R_0$ in $E_4$ are in the same environment, have the same input, and run the same algorithm as nodes in $X, Y, R$ in $E_1$, they will also decide on 0 eventually. This simulation is shown in Figure 6.

		\begin{center}
			\begin{figure}[!htb]
				\centering
				\includegraphics[width=6cm]{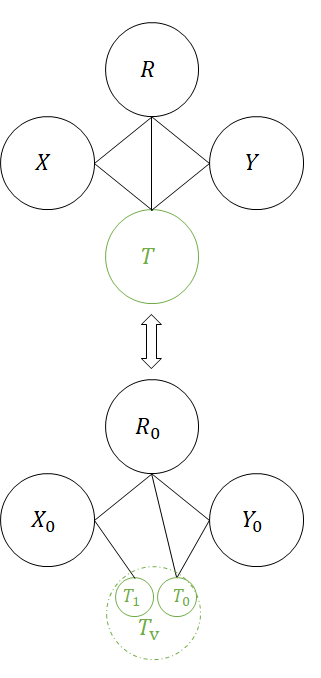}
				\caption{The behavior of nodes in $E_1$ and nodes in $E_4$. Green nodes are Byzantine nodes in $E_1$. $T_v$ copies behavior of $T_1$ to $X_0$ and $T_0$ to $Y_0, R_0$. Nodes in $X, Y, R$ decide on 0 eventually in $E_1$ while nodes in $X_0, Y_0, R_0$ decide on 0 eventually in $E_4$.}
			\end{figure}
		\end{center}
		
		For the same reason, nodes in $X_1, Y_1, R_1$ decide on 1 eventually in $E_4$. We show the simulation of nodes in $X_1, Y_1, R_1$ in $E_4$ in Figure 7.

		\begin{center}
			\begin{figure}[!htb]
				\centering
				\includegraphics[width=6cm]{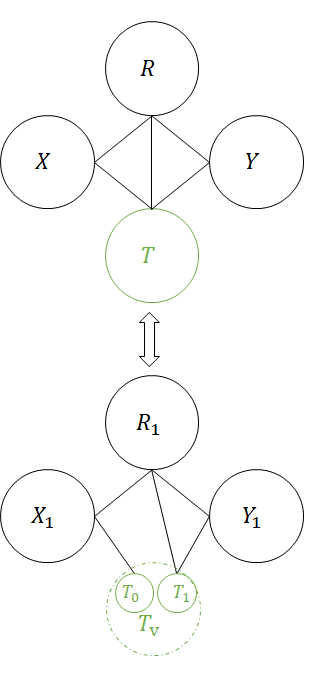}
				\caption{The behavior of nodes in $E_2$ and nodes in $E_4$. Green nodes are Byzantine nodes in $E_2$. $T_v$ copies behavior of $T_0$ to $X_1$ and $T_1$ to $Y_1, R_1$. Nodes in $X, Y, R$ deiced on 1 with high probability in $E_2$ while nodes in $X_1, Y_1, R_1$ deiced on 1 with high probability in $E_4$.}
			\end{figure}
		\end{center}
		
		Each Byzantine node $u\in T$ in $E_2$ considers itself as a combination of $u_0$ and $u_1$ in $E_4$. It reacts to nodes in $X$ as the same as the reaction of $u_0$ to nodes in $X_1$ and reacts to nodes in $Y$ and $R$ as the same as the reaction of $u_1$ to nodes in $Y_1, R_1$. No matter how nodes in $T$ behave, the output of nodes in $X, Y, R$ in $E_2$ is 1. Because nodes in $X_1, Y_1, R_1$ in $E_4$ are in the same environment, have the same input and run the same algorithm as nodes in $X, Y, R$ in $E_2$, they will also decide on 1 eventually.
		
		Based on previous analysis, we can design Byzantine strategies for nodes in $R$ in $E_3$. Then nodes in $X$, $Y$, and $T$ will decide on different values eventually. Each node $u\in R$ plays a mixed strategy of actions of $u_0\in R_0$ and $u_1\in R_1$ in $E_4$. It reacts to nodes in $X$ as the same as the reaction of $u_0$ to nodes in $X_0$ and reacts to nodes in $Y$ and $T$ as the same as the reaction of $u_1$ to nodes in $Y_1, T_1$. This simulation is shown in Figure 8.
		
		\begin{center}
			\begin{figure}[!htb]
				\centering
				\includegraphics[width=6cm]{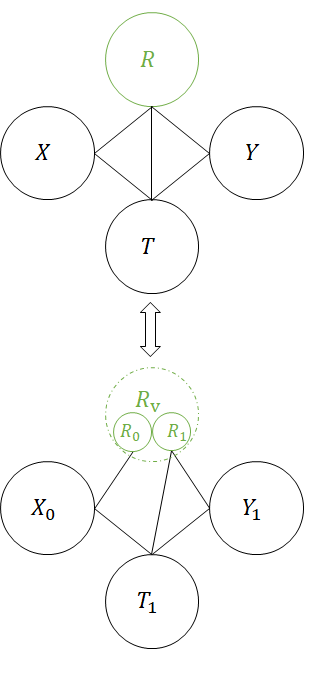}
				\caption{The behavior of nodes in $E_3$ and nodes in $E_4$. Green nodes in $R$ are Byzantine nodes in $E_3$, which copy the behavior of $R_0$ and $R_1$ in $E_4$. correct nodes in $X$ deiced on 0 and nodes in $Y$ decide on 1 with high probability in $E_3$.}
			\end{figure}
		\end{center}
		
		$X_0, Y_1, T_1$ in $E_4$ and $X, Y, T$ in $E_3$ have the same input, the same algorithm procedure, and the same environment. Hence, nodes in $X$ decide on 0, and nodes in $Y$ decide on 1 eventually in $E_3$. This result indicates that $A$ cannot achieve the Byzantine agreement in $E_3$, which contradicts the agreement property of $A$ and our assumption.
		
	\end{proof}
	
	In Lemma 11 and Lemma 12, we prove that $(2f+1)$ vertex connectivity is a necessary condition for solving the asynchronous Byzantine agreement problem in an incomplete network with $n$ nodes in the presence of $f<\frac{n}{3}$ Byzantine nodes.

	\section{Conclusion}
	
	In this work, we investigate the Byzantine agreement problem in the asynchronous distributed system with restricted communication. 
	
	Compare to previous work, we are the first to study the classical Byzantine agreement problem in a more realistic problem in world-scale distributed systems, i.e., blockchain systems.
	
	We prove that following conditions are necessary and sufficient to achieve Byzantine agreement among $n$ nodes with the presence of $f$ Byzantine nodes: the communication graph has $(2f+1)$-vertex connectivity and the number of Byzantine nodes $f$ is less than $\frac{n}{3}$. We also present a randomized algorithm that solves the exact Byzantine agreement problem on incomplete graphs in asynchronous systems.
	
	Beyond these contribution, we propose a three-layer framework that solves the asynchronous Byzantine agreement problem in incomplete networks. The bottom layer is a strong protocol which allows other algorithms \cite{mostefaoui2014signature} which solve the asynchronous Byzantine agreement problem to be implemented in incomplete networks.
	
	One might hope that some classic synchronizers can help. However, classic synchronizers cannot deal with Byzantine nodes. There are algorithms \cite{lamport1985synchronizing} that solve the Byzantine synchronization, however, only in complete graphs, with additional assumptions.

    To build a byzantine-tolerant synchronizer in an incomplete network seems to be an interesting problem, and we believe that our paper could be a stepping stone to understanding that problem. Though we cannot claim that we directly solve the problem, because the Byzantine clock synchronization problem can be considered as a Byzantine agreement problem with specified conditions.

\bibliographystyle{ACM-Reference-Format}
\bibliography{sample-base}

\end{document}